\newlist{AG}{enumerate}{1}
\setlist[AG]{label=(AG\arabic*)}
\patchcmd{\subsection}{\bfseries}{\itshape}{}{}
\patchcmd{\subsection}{-.5em}{.5em}{}{}
\theoremstyle{plain}
   \newtheorem{claim}{Claim}[section]
   \newtheorem{lem}[claim]{Lemma}
   \newtheorem{obs}[claim]{Observation}
   \newtheorem{prop}[claim]{Proposition}
   \newtheorem{thm}[claim]{Theorem}
\theoremstyle{definition}
   \newtheorem{eg}[claim]{Example}
\theoremstyle{remark}
\DeclareMathOperator*{\Count}{Count}
\DeclareMathOperator*{\CountPS}{CountWB}
\DeclareMathOperator*{\GrpContribution}{GrpContrib}
\DeclareMathOperator*{\MatContribution}{MatContrib}
\DeclareMathOperator*{\ModCount}{ModCount}
\DeclareMathOperator*{\Multiple}{Mult}
\begin{document}

\title{On the star-height of subword counting languages and their relationship to Rees zero-matrix semigroups}

\author{Tom Bourne}
\address{School of Mathematics and Statistics, University of St Andrews, St Andrews, Scotland, U.K.}
\email{tom.bourne@st-andrews.ac.uk}

\author{Nik Ru\v{s}kuc}
\address{School of Mathematics and Statistics, University of St Andrews, St Andrews, Scotland, U.K.}
\email{nik.ruskuc@st-andrews.ac.uk}

\date{}

\begin{abstract}
Given a word \( w \) over a finite alphabet, we consider, in three special cases, the generalised star-height of the languages in which \( w \) occurs as a contiguous subword (factor) an exact number of times and of the languages in which \( w \) occurs as a contiguous subword modulo a fixed number, and prove that in each case it is at most one. We use these combinatorial results to show that any language recognised by a Rees (zero-)matrix semigroup over an abelian group is of generalised star-height at most one.
\end{abstract}

\keywords{Regular language, star-height, subword, Rees matrix semigroup}

\maketitle

\section{Introduction and preliminaries}
\label{sec:intro}\label{sec:prelims}

The generalised star-height problem, which asks whether or not there exists an algorithm to compute the generalised star-height of a regular language, is a long-standing problem in the field of formal language theory.
In particular, it is not yet known whether there exist languages of generalised star-height greater than one; see
\cite[Section I.6.4]{sak09} and \cite{pin92}.
The aim of the present paper is to present some new contributions concerning this problem. In Section~\ref{sec:countingSubwords}, we take a combinatorial approach and find the generalised star-height of languages where a fixed word \( w \) appears as a contiguous subword of the words in our language precisely \( k \) times or \( k \) modulo \( n \) times.
In Section~\ref{sec:ReesMatrix}, we apply these results to prove that languages recognised by Rees (zero-)matrix semigroups over abelian groups are of generalised star-height at most one.

% DEFINITIONS
An \emph{alphabet} $A$ is a finite, non-empty set; its elements are \emph{letters}.
A finite sequence of letters is a \emph{word (over \( A \))}. The \emph{length} of a word \( w \), denoted by \( | w | \), is the total number of letters appearing in \( w \). The \emph{empty word}, denoted by \( \varepsilon \), is the unique word of length zero. The set of all words over \( A \) is denoted by \( A^{\ast} \), and the set of all non-empty words over \( A \) is denoted by \( A^{+} \). A \emph{semigroup} (respectively, \emph{monoid}) \emph{language} is a subset of \( A^{+} \) (respectively, \( A^{\ast} \)).

% REGULAR EXPRESSIONS
Given an alphabet \( A \), we define the empty set, the empty word, and each of the letters in \( A \) to be \emph{basic regular expressions}. Using these, we recursively define new \emph{regular expressions} by using the (finite) union, concatenation product, (Kleene) star and complement operations; that is, if \( E \) and \( F \) are regular expressions then so too are \( E \cup F \), \( EF \), \( E^{*} \) and \( E^{c} \). 
A language is \emph{regular} if it can be represented by a regular expression.

% (GENERALISED) STAR-HEIGHT
The (generalised) \emph{star-height} of a regular expression \( E \), denoted by \( h(E) \), is defined recursively as follows: for the basic regular expressions, \( h(\emptyset) = h(\varepsilon) = h(a) = 0 \), where \( a \) is a letter from \( A \); for union and product, \( h(E \cup F)= h(EF) = \mathrm{max}\{h(E), h(F)\} \); for the star operation, \( h(E^{\ast}) = h(E) + 1 \); and for complementation, \( h(E^c) = h(E) \). The (generalised) \emph{star-height} of a language \( L \), denoted by \( h(L) \), is
\begin{align*}
h(L) = \mathrm{min}\{ h(E) \mid E \text{ is a regular expression representing } L \}.
\end{align*}
Note that we can use De Morgan's laws to express intersection and set difference, and that
\begin{align*}
h(E \cap F) = h(E \setminus F) = \mathrm{max}\{h(E), h(F)\}.
\end{align*}

It is well known that the class of regular languages remains unchanged if complementation is removed from the list of allowed operations. One can define the notion of (restricted) star-height with respect to this signature.
In this context, the star-height problem has been solved: there exist languages of arbitrary (restricted) star-height \cite{eggan63}, and the (restricted) star-height of a language is algorithmically computable \cite{hashiguchi83}.

From this point on, the phrase ``star-height'' will always refer to generalised star-height.

The following simple observation, which allows `removal' of stars, will be used throughout the paper:

\begin{obs}[\cite{pin89}]
\label{obs:subsetOfAlphabet}
For any alphabet \( A \) and any subset \( B \) of \( A \) we have 
\[
A^\ast= \emptyset^c \qquad \text{and} \qquad  B^\ast = A^\ast\setminus (A^\ast (A\setminus B) A^\ast).
\]
Hence,
\[
h(A^\ast)=h(B^{\ast}) = 0.
\]
\end{obs}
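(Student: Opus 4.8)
The plan is to establish the two displayed set equalities first and then read the star-heights directly off the recursive definition of \( h \).

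For the first equality, since complementation is taken relative to the universe of all words \( A^\ast \), the complement of the empty language is the whole of \( A^\ast \); that is, \( \emptyset^c = A^\ast \setminus \emptyset = A^\ast \), giving \( A^\ast = \emptyset^c \) at once. For the second equality I would argue by characterising membership in the language \( A^\ast (A\setminus B) A^\ast \). A word \( w \) lies in \( A^\ast (A\setminus B) A^\ast \) precisely when it admits a factorisation \( w = u c v \) with \( u, v \in A^\ast \) and \( c \in A\setminus B \), i.e.\ precisely when \( w \) contains at least one letter lying outside \( B \). Taking complements within \( A^\ast \), the language \( A^\ast \setminus (A^\ast (A\setminus B) A^\ast) \) therefore consists of exactly those words all of whose letters belong to \( B \) (the empty word being included vacuously), and this is by definition \( B^\ast \).

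With the equalities in hand, the star-height computations are routine applications of the recursive rules. From \( A^\ast = \emptyset^c \) we get \( h(A^\ast) = h(\emptyset^c) = h(\emptyset) = 0 \). For \( B^\ast \), note that \( A\setminus B \) is a finite set of letters \( \{a_1, \ldots, a_k\} \), hence represented by the expression \( a_1 \cup \cdots \cup a_k \) of star-height \( 0 \); combining this with the two copies of \( \emptyset^c \) (each of star-height \( 0 \)) under product and then under set-difference, and using \( h(EF) = \max\{h(E), h(F)\} \) together with \( h(E\setminus F) = \max\{h(E), h(F)\} \), every operation preserves star-height \( 0 \). Exhibiting such star-free expressions for \( A^\ast \) and \( B^\ast \), and recalling that star-height is non-negative, yields \( h(A^\ast) = h(B^\ast) = 0 \).

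There is no genuine obstacle here, as this is a foundational observation rather than a substantial result; the only point demanding care is the verification of the \( B^\ast \) equality, where one must correctly describe \( A^\ast (A\setminus B) A^\ast \) as the set of words containing a forbidden letter and keep track of the ambient universe \( A^\ast \) when complementing.
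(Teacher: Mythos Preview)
Your argument is correct. Note that the paper itself does not supply a proof of this observation at all: it is stated with a citation to \cite{pin89} and used as a standing fact throughout, so there is no in-paper proof to compare against. Your verification is the standard one and matches what one would find in the cited reference.
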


% COUNTING - L(w,k) and L(w,k,n)
Let \( u, w \) and \( x \) be elements of \( A^{\ast} \). If \( v = uwx \) then \( u \) is a \emph{prefix} of \( v \), \( w \) is a \emph{contiguous subword} (or \emph{factor}) of \( v \), and \( x \) is a \emph{suffix} of \( v \). Throughout this paper, the phrase ``subword'' will always mean contiguous subword. A prefix of a word that is also a suffix of that word is a \emph{border}, and the proper border of greatest length is said to be \emph{maximal}.

For every word \( w \) in \( A^{+} \) and every word \( v \) in \( A^{\ast} \), we denote the number of times that \( w \) appears as a subword of \( v \) by \( |v|_{w} \). When \( w \) is a letter, say \( w = a \), the notation \( |v|_{a} \) coincides with its usual meaning; that is, the number of times the letter \( a \) appears in the word \( v \). For every word \( w \) in \( A^{+} \) and every non-negative integer \( k \), we define the language \( \Count(w, k) \) by
\[
\Count(w, k) = \{v \in A^{\ast} \mid |v|_{w} = k\};
\]
that is, the set of words \( v \) over \( A \) such that \( w \) appears as a subword of \( v \) precisely \( k \) times. As such, we regard \( \Count(w, 0) \) as the set of all words that do not feature \( w \) as a subword. From this characterisation, we note that
\begin{align}
\label{eq:Count0}
v \in \Count(w, 0) \Leftrightarrow v \in A^{\ast} \setminus A^{\ast} w A^{\ast} \Leftrightarrow v \in \left(A^{\ast} w A^{\ast}\right)^{c} \Leftrightarrow v \in \left( \emptyset^{c} w \emptyset^{c} \right)^{c},
\end{align}
where the the final equivalence follows by Observation~\ref{obs:subsetOfAlphabet}. Thus, for a fixed word \( w \), the language \( \Count(w, 0) \) is representable by a star-free expression and is therefore of star-height zero.

In a similar manner, for every word \( w \) in \( A^{+} \),  every integer \( n \) greater than or equal to \( 2 \)  and every non-negative integer \( k \) with \( 0 \leq k < n \), we define the language \( \ModCount(w, k, n) \) by
\[
\ModCount(w, k, n) = \{v \in A^{\ast} \mid |v|_{w} \equiv k \pmod n\};
\]
that is, the set of words \( v \) over \( A \) such that \( w \) appears as a subword of \( v \) precisely \( k \) modulo \( n \) times.

It should be noted that the languages \( \Count(w, k) \) and \( \ModCount(w, k, n) \) are regular. 
This can be proved directly; for example, by building a finite state automaton accepting the language and appealing to Kleene's Theorem (see, for example, \cite[Theorem I.2.3]{sak09}). For the languages under consideration in this paper, regularity also follows from the proofs in Section~\ref{sec:countingSubwords}.

In Section~\ref{sec:countingSubwords} we prove the following result:

\begin{prop}
Let \( A \) be an alphabet. For any word \( w \) in \( A^{+} \) with \( |w| \leq 3 \), the language \( \Count(w, k) \) is of star-height zero, and the language \( \ModCount(w, k, n) \) is of star-height at most one.
\end{prop}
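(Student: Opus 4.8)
The plan is to prove both assertions by analysing the self-overlap structure of $w$, since this is what governs how occurrences of $w$ can interact inside a word. The key elementary fact is that two occurrences of $w$ in a word overlap only if the length of their overlap is a nonempty proper border of $w$; hence if $w$ is \emph{unbordered} its occurrences are pairwise disjoint. Inspecting all words of length at most three, the unbordered ones are $a$, $ab$, $abc$, $aab$, $abb$ (for distinct letters $a,b,c$), while the only self-overlapping ones are $aa$, $aaa$ and $aba$. I would treat the unbordered words by a uniform argument and the three remaining words as special cases.

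For the \Count{} statement I would first reduce to the ``at least $k$'' languages $L_{\ge k}=\{v : |v|_w \ge k\}$, using $\Count(w,k)=L_{\ge k}\setminus L_{\ge k+1}$ together with $h(E\setminus F)=\max\{h(E),h(F)\}$; it therefore suffices to show each $L_{\ge k}$ is star-free. For unbordered $w$ the disjointness of occurrences gives directly
\[
L_{\ge k}=\underbrace{A^{\ast}wA^{\ast}w\cdots wA^{\ast}}_{k\text{ copies of }w},
\]
which is star-free by Observation~\ref{obs:subsetOfAlphabet}. For the self-overlapping words the occurrences cluster along maximal powers of the period of $w$ (runs $a^{m}$ for $aa$ and $aaa$, and alternating blocks $a(ba)^{m}$ for $aba$), and a cluster of a given size contributes a fixed number of occurrences. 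Since reaching $k$ occurrences uses only finitely many cluster configurations, I expect $L_{\ge k}$ to be a finite union, over compositions $c_1+\cdots+c_p=k$, of products of the form $A^{\ast}u_1 A^{\ast}\cdots A^{\ast}u_p A^{\ast}$, where each $u_i$ is the cluster factor contributing $c_i$ occurrences (e.g.\ $u_i=a^{c_i+1}$ for $w=aa$). Each such term is star-free, so $\Count(w,k)$ is star-free.

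For the \ModCount{} statement one star is genuinely needed, since counting modulo $n$ is not aperiodic, and the aim is to confine all the looping to a single outermost star. For unbordered $w$ I would introduce the star-free ``block'' $E=\Count(w,1)\cap A^{\ast}w$ of words that end in $w$ and contain exactly one occurrence of $w$. The crucial lemma is that concatenating blocks from $E$ neither creates nor destroys occurrences: a putative occurrence straddling a junction would split as a nonempty proper suffix of $w$ followed by a nonempty proper prefix of $w$, that is, a nonempty proper border of $w$, which is impossible. Hence a product of $m$ blocks has exactly $m$ occurrences, and, peeling off the trailing occurrence-free tail,
\[
\ModCount(w,k,n)=E^{k}\,(E^{n})^{\ast}\,\Count(w,0),
\]
which has star-height at most one. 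For each self-overlapping word I would instead express $|v|_w$ as an integer combination of simpler statistics counted by unbordered patterns or by maximal runs --- for instance $|v|_{aa}=|v|_a-r_a$, where $r_a$ is the number of maximal $a$-runs --- and count each of these modulo $n$ at star-height at most one by the same block device (now with blocks such as $B^{+}a^{+}$, where $B=A\setminus\{a\}$, isolating a single run). Taking the finite Boolean combination over residues modulo $n$ then keeps the star-height at most one.

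I expect the self-overlapping words $aa$, $aaa$ and $aba$ to be the main obstacle. For them the clean block decomposition fails, because occurrences share letters, so the delicate point is to set up a cluster/run decomposition that faithfully records the number of occurrences --- exactly, for the \Count{} part, and modulo $n$, for the \ModCount{} part --- without miscounting at the boundaries between clusters and the surrounding occurrence-free material. Verifying that the chosen statistics (runs, isolated letters, alternating blocks) combine to give $|v|_w$, and that each is realisable at star-height at most one, is where the real work lies.
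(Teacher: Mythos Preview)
Your treatment of the unbordered words is essentially the paper's: your block $E=\Count(w,1)\cap A^{\ast}w$ coincides with the paper's $\Count(w,0)\cdot w$, and the resulting expression $E^{k}(E^{n})^{\ast}\Count(w,0)$ is exactly what the paper writes. Your $L_{\ge k}$ device for the $\Count$ statement is a clean alternative to the paper's direct expression $[\Count(w,0)\cdot w]^{k}\cdot\Count(w,0)$, and your cluster--composition description of $L_{\ge k}$ for $aa$, $aaa$, $aba$ is correct once the greedy ``fill up to $k$ using prefixes of the maximal clusters'' argument is written out.

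Where you genuinely diverge from the paper is in the $\ModCount$ analysis of the three bordered words. The paper stays entirely inside regular expressions: it builds a star-free language $\CountPS$ of words with $w$ as a border and a prescribed count, and a star-free ``bridge'' language $\Multiple(w,n)$ that carries exactly $n$ further occurrences while respecting the overlap at the seam; a single star on $\Multiple(w,n)$ then does all the modular work. Your route is arithmetic: reduce $|v|_{w}$ to an integer combination of statistics that can each be counted modulo $n$ at height one, then take a finite Boolean combination over residues. For $aa$ your identity $|v|_{aa}=|v|_{a}-r_{a}$ is correct, and $r_{a}$ is indeed countable at height one via the block $B^{+}a^{+}$ once the boundary cases (words beginning with $a$, words in $B^{\ast}$) are handled. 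For $aaa$ and $aba$ you only gesture at the analogous identities; they do exist---for instance $|v|_{aba}=|v|_{ab}-\sum_{c\neq a}|v|_{abc}-[\,v\text{ ends in }ab\,]$, each summand being either an unbordered length-$\le 3$ count or star-free---so your plan can be completed, but this is precisely the ``real work'' you flag, and it has to be done separately for each of the three words.

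In short, both approaches succeed. The paper's buys explicit height-one expressions uniform in shape across the bordered cases, which feed directly into the Rees-matrix application; yours buys a conceptually cleaner reduction to the unbordered case, at the cost of finding and verifying a bespoke identity (and the attendant boundary bookkeeping) for each bordered $w$.
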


In Section~\ref{sec:ReesMatrix} we are interested in languages recognised by Rees (zero-)matrix semigroups over abelian groups. A language \( L \subseteq A^{+} \) is \emph{recognised} by a semigroup \( S \) if there exists a semigroup morphism \( \varphi : A^{+} \to S \) and a subset \( X \) of \( S \) such that \( L = X \varphi^{-1} \). Again by Kleene's Theorem, a language is recognisable by a finite semigroup if and only if it is regular. We then prove:

\begin{thm}
A language recognised by a Rees (zero-)matrix semigroup over an abelian group is of star-height at most one.
\end{thm}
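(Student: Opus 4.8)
The plan is to reduce membership in $L$ to a finite Boolean combination of the subword-counting languages of the Proposition. Fix a recognising morphism $\varphi \colon A^{+} \to S$ with $L = X\varphi^{-1}$, where $S = \mathcal{M}^{0}[G; I, \Lambda; P]$ (the case without zero being the obvious specialisation). Since $L$ is regular, $S$ is finite, so $G$, $I$ and $\Lambda$ are finite; write $m$ for the exponent of $G$. For each letter $a \in A$ with $\varphi(a) \neq 0$, record $\varphi(a) = (i_{a}, g_{a}, \lambda_{a})$.

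First I would compute $\varphi(v)$ for a word $v = a_{1} \cdots a_{n}$. In the group case the product telescopes to
\[
\varphi(v) = \Bigl(i_{a_{1}},\ \prod_{j=1}^{n} g_{a_{j}}\ \prod_{j=1}^{n-1} p_{\lambda_{a_{j}} i_{a_{j+1}}},\ \lambda_{a_{n}}\Bigr).
\]
The decisive observation is that, because $G$ is abelian, each factor depends only on how often a fixed letter or a fixed length-two factor occurs: setting $q_{ab} = p_{\lambda_{a} i_{b}}$, the middle coordinate equals $\prod_{a \in A} g_{a}^{|v|_{a}} \cdot \prod_{(a,b)} q_{ab}^{\,|v|_{ab}}$, and this value depends only on the residues $|v|_{a} \bmod m$ and $|v|_{ab} \bmod m$. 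Thus the three coordinates of $\varphi(v)$ are governed entirely by the first letter, the last letter, and finitely many counts of subwords of length $1$ and $2$ taken modulo $m$.

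Next I would assemble $L = \bigcup_{s \in X} s\varphi^{-1}$ from these ingredients. For a non-zero target $s = (i, g, \lambda)$, the language $s\varphi^{-1}$ is the intersection of the star-height-zero languages $\bigcup_{a \colon i_{a} = i} a A^{\ast}$ and $\bigcup_{b \colon \lambda_{b} = \lambda} A^{\ast} b$ (fixing first and last letter) with the condition on the middle coordinate, which is a finite union over the admissible residue-tuples of intersections of languages $\ModCount(a, r_{a}, m)$ and $\ModCount(ab, r_{ab}, m)$. Since each such subword has length at most $2 \leq 3$, the Proposition gives that every one of these has star-height at most one; as $h(A^{\ast}) = 0$ and Boolean operations satisfy $h(E \cup F) = h(E \cap F) = \max\{h(E), h(F)\}$, each $s\varphi^{-1}$ has star-height at most one.

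The remaining point, and the only place where the zero genuinely intervenes, is the collapse of a product to $0$. A word maps to $0$ precisely when it contains a letter sent to $0$ or a length-two factor $ab$ with $q_{ab} = 0$; hence $\{v : \varphi(v) \neq 0\}$ is the intersection of the star-free language $A_{1}^{+}$, where $A_{1}$ is the set of letters not sent to $0$, with the languages $\Count(ab, 0)$ over the forbidden pairs, and so has star-height zero. Intersecting each non-zero $s\varphi^{-1}$ with this set leaves star-height at most one, and if $0 \in X$ then $0\varphi^{-1}$ is the complement of this star-height-zero set, which again has star-height zero. I expect the main obstacle to be establishing the commutativity reduction cleanly---verifying that the entire middle coordinate really is a function of the length-$1$ and length-$2$ subword counts modulo $m$---since this is exactly what lets the Proposition be applied and caps the required subword length at two.
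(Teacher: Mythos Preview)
Your argument is correct and shares its core with the paper's: both reduce the preimage of a non-zero element to conditions on the first letter, the last letter, and the counts $|v|_{a}$ and $|v|_{ab}$ modulo a fixed integer, then invoke the subword-counting proposition for words of length at most two. The difference is structural. The paper first proves the result for Rees zero-matrix semigroups over \emph{cyclic} groups $\mathbb{Z}_n$ (so that the middle coordinate is a sum modulo $n$, split into a ``group contribution'' from letters and a ``matrix contribution'' from length-two factors), then establishes a general lemma that passing to a direct product $S\times T$ in the underlying semigroup preserves any star-height bound, and finally appeals to the Fundamental Theorem of Finite Abelian Groups. You instead work directly with an arbitrary finite abelian $G$, using its exponent $m$ to reduce all exponents modulo $m$ in one step. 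Your route is shorter and avoids the detour through the Fundamental Theorem; the paper's route isolates an independently useful direct-product lemma (if Rees zero-matrix semigroups over $S$ and over $T$ each recognise only languages of star-height $\leq h$, so do those over $S\times T$). One small point worth making explicit in your write-up: the product formula for the middle coordinate is only valid for words with $\varphi(v)\neq 0$, so the intersection with the star-height-zero language $\{v:\varphi(v)\neq 0\}$ that you mention at the end is not optional but is genuinely needed to make the description of each non-zero $s\varphi^{-1}$ correct.
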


In order to prove this, we combine the results of Section~\ref{sec:countingSubwords} and a general result on Rees zero-matrix semigroups over semigroups with the following known results:

\begin{AG}
\item
\label{lem:commGrpIffBool}
A language \( L \) is recognised by a finite abelian group if and only if \( L \) is a boolean combination of languages of the form \( \ModCount(a, k, n) \), where \( a \) is a letter from an alphabet \( A \); see, for example, \cite[Corollary 2.3.12]{pin86}.
\item
\label{prop:commGrpHeightOne}
A language recognised by a finite abelian group is of star-height at most one; \cite{henneman71}.
\end{AG}

\section{Counting subwords}
\label{sec:countingSubwords}

Throughout this section we will consistently make use of the notation \( \Count(w, k) \) and \( \ModCount(w, k, n) \) as introduced in Section~\ref{sec:prelims}. We split our analysis into the following three special cases:
\begin{enumerate}[label=(\arabic*), widest=(2), leftmargin=10mm]
\item Counting subwords over a unary alphabet;
\item Counting subwords with maximal border \( \varepsilon \) over a non-unary alphabet;
\item Counting subwords that are a power of a letter over a non-unary alphabet.
\end{enumerate}

Case (1) is simple, but we consider it for the sake of establishing some equalities that will be useful subsequently.
The substantive difference between cases (2) and (3) is that in (3) the letters of \( w \) may appear as components of multiple subwords. For example, if we are counting the number of occurrences of \( aa \) and we encounter the expression \( aaa \) then we have two occurrences of \( aa \) and the central \( a \) belongs to both. This is not a problem in the second case as having maximal border \( \varepsilon \) ensures that
occurrences of \( w\) do not overlap.

\subsection{Case 1: a unary alphabet}
\label{sec:countingUnaryAlphabet}

Let \( A=\{a\} \) be a unary alphabet. A language \( L \) over \( A \) is regular if and only if \( L \) is of the form \( X \cup Y (a^{r})^{\ast} \), where \( X \) and \( Y \) are finite sets and \( r \) is an integer greater than or equal to \( 0 \); see \cite[Exercise II.2.4]{sak09}. Thus, every language over a unary alphabet is of star-height at most one. However, we want to find expressions of minimal star-height for the languages \( \Count(a^{r}, k) \) and \( \ModCount(a^{r}, k, n) \), where \( r \) is a natural number, to be used later in the non-unary cases.

We begin with finding an expression for \( \Count(a^{r}, k) \). If we consider an arbitrary word \( a^s \) then each \( a \) appearing in it is the start of an occurrence of \( a^r \), except for the final \( r-1 \) letters. It immediately follows that
\begin{eqnarray}
\label{eq:Count(a^r,0)}
&&\Count(a^{r}, 0) = \varepsilon \cup a \cup \dots \cup a^{r-1},\\
\label{eq:Count(a^r,k)}
&&\Count(a^{r}, k) = a^{r+k-1}\ (k>0).
\end{eqnarray}

Next we find an expression for \( \ModCount(a^{r}, k, n) \).  The approach taken is to first count \( k \) occurrences of the subword \( a^{r} \) and then repeat in multiples of \( n \). Recalling the expression for \( \Count(a^{r}, k) \) in \eqref{eq:Count(a^r,k)} we obtain
\[
\ModCount(a^{r}, k, n) = a^{r+k-1}(a^{n})^{\ast}.
\]
An expression for the remaining language, namely \( \ModCount(a^{r}, 0, n) \), 
is obtained by using similar reasoning, but keeping in mind the special nature of \( \Count(a^{r}, 0) \) as in 
\eqref{eq:Count(a^r,0)}; it yields
\[
\ModCount(a^{r}, 0, n) = \varepsilon \cup a \cup \dots \cup a^{r-1} \cup a^{r+n-1}(a^{n})^{\ast}.
\]

%It should be noted that the expression for  \( \ModCount(a^{r}, k, n) \) actually only depends on \( r+k \) and \( n \). Thus, different 
%combinations of the parameters \( r, k \) and \( n \) may lead to the same regular expression, and hence the same language. For 
%example, \( \ModCount(a^{3}, 1, 4) \), \( \ModCount(a^{2}, 2, 4) \) and \( \ModCount(a, 3, 4) \) are all represented by the expression \( a^{3}(a^{4})^{\ast} \).

A combination of the above constitutes a proof for the following lemma:

\begin{lem}
\label{lem:unaryAlphabetHeights}
Let \( A = \{a\} \) be a unary alphabet. For every natural number \( r \), the language \( \Count(a^{r}, k) \) is of star-height zero, and the language \( \ModCount(a^{r}, k, n) \) is of star-height at most one. \qed
\end{lem}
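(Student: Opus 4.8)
The plan is to exploit the fact that over a unary alphabet the number of occurrences of a subword depends only on length. First I would establish the single counting identity that drives everything: in the word \( a^s \) an occurrence of \( a^r \) is completely determined by the position at which it begins, and such an occurrence may begin at any of the first \( s-r+1 \) letters and nowhere else. Hence
\[
|a^s|_{a^r} =
\begin{cases}
s - r + 1 & \text{if } s \geq r, \\
0 & \text{if } s < r.
\end{cases}
\]
From here the rest of the argument is a matter of reading off which lengths \( s \) satisfy the relevant (in)equality.

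For \( \Count(a^r,k) \) I would solve \( |a^s|_{a^r}=k \) directly. When \( k>0 \), the equation \( s-r+1=k \) has the unique solution \( s=r+k-1 \), giving the single word \( a^{r+k-1} \) as in \eqref{eq:Count(a^r,k)}; and \( |a^s|_{a^r}=0 \) holds precisely when \( s<r \), i.e.\ for \( s\in\{0,1,\dots,r-1\} \), giving the finite union \eqref{eq:Count(a^r,0)}. For \( \ModCount(a^r,k,n) \) the idea is to take a shortest word meeting the congruence and then append blocks \( a^n \), each of which increases the count by exactly \( n \) and so preserves the residue. When \( k>0 \) the shortest such word is \( a^{r+k-1} \), yielding \( \ModCount(a^r,k,n)=a^{r+k-1}(a^n)^{\ast} \). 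The residue class \( k=0 \) must be handled separately: the short words \( \varepsilon,a,\dots,a^{r-1} \) all have count \( 0 \) but are not produced by appending \( a^n \)-blocks to a single base word, while the shortest word of \emph{positive} count divisible by \( n \) is \( a^{r+n-1} \); this produces \( \ModCount(a^r,0,n)=\varepsilon\cup a\cup\dots\cup a^{r-1}\cup a^{r+n-1}(a^n)^{\ast} \).

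Finally I would read off the heights from the recursive definition of \( h \). Each expression for \( \Count(a^r,k) \) is a finite union of single words and so is star-free, whence \( h(\Count(a^r,k))=0 \). Each expression for \( \ModCount(a^r,k,n) \) is a finite union of a star-free part with a term of the form \( a^{r+\ell}(a^n)^{\ast} \) containing a single, unnested Kleene star; that term has height one, and since union does not increase height we obtain \( h(\ModCount(a^r,k,n))\leq 1 \). The only genuinely delicate point in the whole argument is the separate treatment of the \( k=0 \) case for \( \ModCount \): one must not conflate the short words of count \( 0 \) with the periodic family beginning at \( a^{r+n-1} \), and one must verify that every length \( s\geq r \) with \( s-r+1\equiv 0 \pmod n \) is indeed captured by the term \( a^{r+n-1}(a^n)^{\ast} \). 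Everything else is a routine verification.
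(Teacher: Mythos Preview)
Your proposal is correct and follows essentially the same approach as the paper: establish the counting identity \( |a^s|_{a^r}=\max(s-r+1,0) \), read off the explicit expressions \eqref{eq:Count(a^r,0)}, \eqref{eq:Count(a^r,k)} and the two \( \ModCount \) formulae (with the separate treatment of \( k=0 \)), and then observe that the former are finite and the latter involve a single unnested star. The only difference is presentational---you spell out the length-solving and the ``append \( a^n \)-blocks'' idea a bit more explicitly than the paper does---but the underlying argument is identical.
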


\subsection{Case 2: a non-unary alphabet and maximal border \( \varepsilon \)} 
\label{sec:countEmptyBorderNonunary}

We now consider the case where \( A \) is a non-unary alphabet and the subword \( w \) under consideration has maximal border \( \varepsilon \), meaning that \( w \) does not overlap itself. As such, once we have started to read \( w \) we can continue reading it until it finishes without worrying that another occurrence of \( w \) may have already begun.

From \eqref{eq:Count0}, we know that the language \( \Count(w, 0) \) can be represented by the star-free expression \( (\emptyset^{c} w \emptyset^{c})^{c} \). Knowing this, we can obtain an expression representing \( \Count(w, k) \) which is star-free:
\begin{align*}
\Count(w, k) = [ \Count(w, 0) \cdot w ]^{k} \cdot \Count(w, 0).
\end{align*}
As can be seen from this expression, we begin with a word from \( \Count(w, 0) \), which may be empty, and then count the \( k \) occurrences of the subword \( w \), with each pair of occurrences `padded' by a word from \( \Count(w, 0) \). We finish with a word from \( \Count(w, 0) \), which, again, may be empty.

We now turn our attention to counting subwords modulo \( n \). An expression for \( \ModCount(w, k, n) \) which is of star-height one is given by
\begin{align*}
[ \Count(w, 0) \cdot w ]^{k} \big[ [ \Count(w, 0) \cdot w ]^{n} \big]^{\ast} \cdot \Count(w, 0).
\end{align*}
As can be seen from this expression, we begin with a word from \( \Count(w, 0) \) and then count the first \( k \) occurrences of the subword \( w \), with each pair of occurrences `padded' by a word from \( \Count(w, 0) \). After this, we allow the same expression to repeat in non-negative multiples of \( n \) before ending with a final word from \( \Count(w, 0) \). 

A combination of the above constitutes a proof for the following lemma:

\begin{lem}
\label{lem:nonunaryEmptyBorderHeights}
Let \( A \) be a non-unary alphabet and let \( w \) have maximal border \( \varepsilon \). Every language \( \Count(w, k) \) is of star-height zero, and every language \( \ModCount(w, k, n) \) is of star-height at most one. \qed
\end{lem}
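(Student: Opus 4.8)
The plan is to establish the two regular expressions displayed above as genuine set-theoretic identities, namely
\[
\Count(w,k) = [\Count(w,0)\cdot w]^{k}\cdot \Count(w,0)
\]
and
\[
\ModCount(w,k,n) = [\Count(w,0)\cdot w]^{k}\big[[\Count(w,0)\cdot w]^{n}\big]^{\ast}\cdot \Count(w,0),
\]
and then to read off the star-heights. Once the identities are proved, the height claims are immediate: by \eqref{eq:Count0} the language \(\Count(w,0)\) is represented by the star-free expression \((\emptyset^{c} w \emptyset^{c})^{c}\), and \(w\) is a fixed word, so the finite products and the \(k\)-fold product on the right of the first identity are star-free, giving \(h(\Count(w,k))=0\); the second expression applies a single star to a star-free language, giving \(h(\ModCount(w,k,n))\le 1\).

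The heart of the argument, and the only place the hypothesis on \(w\) is used, is a single combinatorial lemma that I would isolate first: since the maximal border of \(w\) is \(\varepsilon\), any two occurrences of \(w\) in a word are non-overlapping. Indeed, if \(w\) occurred at positions \(p<q\) of some word with \(q<p+|w|\), then the overlapping segment would be simultaneously a nonempty proper prefix and a nonempty proper suffix of \(w\), i.e.\ a nonempty border, contrary to hypothesis. This non-overlap property is exactly what makes the ``padding'' picture behind the displayed expressions correct.

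Using non-overlap, I would prove the first identity by a clean classification of occurrences. For \(\subseteq\): given \(v\) with exactly \(k\) occurrences of \(w\), list them left to right as \(W_1,\dots,W_k\); non-overlap makes their positions pairwise disjoint and ordered, so \(v=u_0W_1u_1\cdots W_ku_k\) for uniquely determined gaps \(u_i\). Any occurrence of \(w\) inside some \(u_i\) would be an occurrence of \(w\) in \(v\) disjoint from every \(W_j\), hence a \((k+1)\)-th occurrence, which is impossible; so each \(u_i\in\Count(w,0)\). For \(\supseteq\): given \(v=u_0wu_1\cdots wu_k\) with each \(u_i\in\Count(w,0)\), I must show the \(k\) displayed copies of \(w\) are the only occurrences. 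Consider an arbitrary occurrence \(O\). If \(O\) meets one of the displayed blocks, then non-overlap forces \(O\) to coincide with it, since two overlapping occurrences of an unbordered word are equal; if \(O\) meets none of them, then, being a connected interval disjoint from all the length-\(|w|\) blocks, it lies inside a single \(u_i\), contradicting \(u_i\in\Count(w,0)\). Hence \(v\) has exactly \(k\) occurrences.

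I expect this last step — ruling out ``boundary-straddling'' occurrences in the \(\supseteq\) direction — to be the main obstacle, precisely because it is what would fail for a bordered \(w\) (for instance \(aa\) inside \(aaa\)); the non-overlap lemma is exactly what tames it. Finally, for the modulo identity I would rewrite the right-hand side as
\[
\bigcup_{m\ge 0}[\Count(w,0)\cdot w]^{k+nm}\cdot \Count(w,0) = \bigcup_{m\ge 0}\Count(w,k+nm),
\]
applying the first identity termwise; since \(0\le k<n\), the exponents \(k+nm\) with \(m\ge 0\) range over exactly the non-negative integers congruent to \(k\) modulo \(n\), so the union is \(\ModCount(w,k,n)\), completing the proof.
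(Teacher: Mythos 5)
Your proposal is correct and takes essentially the same route as the paper: the paper exhibits exactly the same two expressions and justifies them via the same padding picture, relying on the observation (stated informally at the start of the section) that maximal border \( \varepsilon \) forces occurrences of \( w \) to be non-overlapping. You merely make rigorous what the paper leaves implicit --- the non-overlap lemma, both inclusions for \( \Count(w,k) \) including the boundary-straddling case, and the termwise rewriting of the starred expression as \( \bigcup_{m\ge 0}\Count(w,k+nm) \) --- so there is nothing to fix.
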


\subsection{Case 3: a non-unary alphabet and powers of a letter}
\label{sec:countPowerNonunary}

We now analyse the third case where the subword under consideration consists of a power of a letter \( a \) from an alphabet \( A \) which contains at least two letters. Specifically, we are interested in finding generalised regular expressions for the languages \( \Count(a^{r}, k) \) and \( \ModCount(a^{r}, k, n) \), where \( r \) is a natural number.

From \eqref{eq:Count0}, we know that the language \( \Count(a^{r}, 0) \) can be represented by the star-free expression \( (\emptyset^{c} (a^{r}) \emptyset^{c})^{c} \).

In the case where \( k>0 \), we find an expression representing the language \( \Count(a^{r}, k) \) by first considering only those words that have \( a^{r} \) as a border. We denote this language by \( \CountPS(a^{r}, k) \). Let \( B = A \setminus \{a\}\) .
We think of \( B \) as a set of `buffers' that stop us from `accidentally' reading two \( a \)s in a row. This is important as letters may appear as a component of more than one subword and the `buffers' are used to mark the points where we stop reading powers of \( a \). We also define the subset \( W \) of \( A^{\ast} \) by
\begin{align*}
W &= B \cup (B \cdot \Count(a^{r}, 0) \cdot B),
\end{align*}
which is the set of non-empty words that do not feature \( a^{r} \) as a subword and neither start nor end with \( a \). 
It is useful to think of elements of \( W \) as `wedges', separating the strings that feature \( a^{r} \) from one another. Note that the individual components of \( W \) are all star-free expressions which implies that \( W \) is a language of star-height zero.

A general formula for \( \CountPS(a^{r}, k) \) is given by
\begin{align*}
\CountPS(a^{r}, k) =
\bigcup_{j=1}^{k}
\bigcup_{\substack{k_{1}, k_{2}, \dots, k_{j} \geq r \\ k_{1} + k_{2} + \dots + k_{j} = k + \left(r-1\right)j}}
a^{k_{1}} W a^{k_{2}} W \dots W a^{k_{j}},
\end{align*}
where the right-hand side is a regular expression since both unions are finite. Note that the expression is star-free. To see that this equality is correct, consider an arbitrary word \( w \) in \( \CountPS(a^{r}, k) \). Let \( a^{k_{1}}, \dots, a^{k_{j}} \) be the maximal subwords of \( w \) that are powers of \( a \) and have length greater than or equal to \( r \). Note that \( a^{k_{1}} \) must be a prefix of \( w \) as \( w \) starts with \( a^{r} \), and, likewise, \( a^{k_{j}} \) must be a suffix. Hence, we have a decomposition \( w = a^{k_{1}} w_{1} a^{k_{2}} w_{2} \dots w_{j-1} a^{k_{j}} \), where, necessarily, \( w_{1}, \dots, w_{j-1} \) belong to \( W \). Furthermore, each \( a^{k_{i}} \) contains precisely \( k_{i} - r + 1 \) occurrences of \( a^{r} \) by \eqref{eq:Count(a^r,k)}. Since all of the occurrences of \( a^{r} \) appear as subwords of \( a^{k_{i}} \),  we must have
\begin{align*}
k = |w|_{a^{r}} = \sum_{i = 1}^{j} (k_{i} - r + 1) = k_{1} + \dots + k_{j} - (r - 1)j,
\end{align*}
and so \( w \) belongs to the right-hand side. A similar analysis shows that, conversely, every element of the right-side side belongs to \( \CountPS(a^{r}, k) \).

Now, a star-free expression representing the language consisting of \emph{all} words that contain precisely \( k \) occurrences of \( a^{r} \) as a subword, namely \( \Count(a^{r}, k) \), is given by
\begin{align*}
\big[ \varepsilon \cup [ \Count(a^{r}, 0) \cdot B ] \big] \cdot \CountPS(a^{r}, k) \cdot \big[ [ B \cdot \Count(a^{r}, 0) ] \cup \varepsilon \big].
\end{align*}
To see this, note that the \( k \) occurrences of \( a^{r} \) all appear in the central term \( \CountPS(a^{r}, k) \). This term can be preceded by either the empty word or a word that does not contain \( a^{r} \) as a subword; that is, a word from the language \( \Count(a^{r}, 0) \). However, since words in \( \Count(a^{r}, 0) \) have the potential to end with a power of \( a \), we must utilise a `buffer' from the set \( B \). A dual argument deals with potential suffices. Since each of the components of the above expression are star-free, the language \( \Count(a^{r}, k) \) must be of star-height zero.

We now turn our attention to counting occurrences of \( a^{r} \) modulo \( n \). Our strategy here is to count the first \( k \) occurrences of \( a^{r} \) using the expression found above for \( \CountPS(a^{r}, k) \), and then count occurrences of \( a^{r} \) in multiples of \( n \) before adding appropriate prefixes and suffices (as in the case of \( \Count(a^{r}, k) \)).

Having used \( \CountPS(a^{r}, k) \) to count the first \( k \) occurrences of \( a^{r} \), we note that the suffix \( a^{r-1} \) has the potential to be a component of a new occurrence of \( a^{r} \) if the part of the word immediately following \( a^{r-1} \) begins with an \( a \). Similarly, the suffix \( a^{r-2} \) immediately followed by an \( a^{2} \) leads to another occurrence of \( a^{r} \). In order to take these possibilities into account, let \( \Multiple(a^{r}, n) \) denote the language whose words contain precisely \( n \) occurrences of the subword \( a^{r} \) when left concatenated by \( a^{r-1} \) and also have suffix \( a^{r} \):
\begin{align*}
\Multiple(a^{r}, n) = \{ w \in A^{\ast} \mid |a^{r-1}w|_{a^{r}} = n \text{ and } w \text{ has suffix } a^{r} \}.
\end{align*}
The significance of the assumption about the suffix \( a^{r} \) is that every count stops precisely when the \( n \)-th occurrence of \( a^{r} \) is met, and that this suffix `feeds into' the next group of occurrences of \( a^{r} \). 

A star-free expression for this language is given by
\begin{align*}
\Multiple(a^{r}, n) = a^{n} \cup \bigcup_{i = 0}^{n-1} a^{i} W \cdot \CountPS(a^{r}, n-i).
\end{align*}
Note that the right-hand side is a regular expression since the union is finite. To see that this equality is correct, consider an arbitrary word \( w \) in \( \Multiple(a^{r}, n) \). If \( w = a^{k} \) for some natural number \( k \) then
\begin{align*}
n = |a^{r-1}w|_{a^{r}} = |a^{r-1}a^{k}|_{a^{r}} = |a^{r+k-1}|_{a^{r}} = k
\end{align*} 
by \eqref{eq:Count(a^r,k)}, and hence \( w = a^{n} \). Otherwise, we can decompose \( w \) as 
\begin{align*}
w = a^{k_{1}} w_{1} a^{k_{2}} w_{2} \dots w_{j-1} a^{k_{j}},
\end{align*}
where,
\begin{align*}
w_{1}, \dots, w_{j-1} \in W, \qquad k_{1} \geq 0 \qquad \text{and} \qquad  k_{2}, \dots, k_{j} \geq r.
\end{align*}
The maximal subwords of \( a^{r-1}w \) that are powers of \( a \) of exponent greater than or equal to \( r \) are \( a^{r-1}a^{k_{1}} = a^{r+k_{1}-1} \) (provided that \( k_{1} > 0 \)) and \( a^{k_{2}}, \dots, a^{k_{j}} \). Furthermore, our decomposition of \( w \) can be used to split \( a^{r-1}w \) as \( a^{r-1}w = xy \), where \( x = a^{r+k_{1}-1}w_{1} \) and \( y = a^{k_{2}} w_{2} \dots w_{j-1} a^{k_{j}} \). Suppose that \( x \) contains \( i \) occurrences of \( a^{r} \). Then
\begin{align*}
i = |a^{r+k_{1}-1} w_{1}|_{a^{r}} = |a^{r+k_{1}-1}|_{a^{r}} = k_{1}
\end{align*}
by \eqref{eq:Count(a^r,k)}. Moreover, \( y \) must contain the remaining \( n - i \) occurrences of \( a^{r} \) and has \( a^{r} \) as a border. Hence \( y \) belongs to \( \CountPS(a^{r}, n-i) \). Thus, \( w \) belongs to \( a^{i} W \cdot \CountPS(a^{r}, n-i) \) and hence belongs to the union on the right-hand side. A similar analysis shows that, conversely, every element of the right-hand side belongs to \( \Multiple(a^{r}, n) \).

Putting all of this together, we have that an expression representing \( \ModCount(a^{r}, k, n) \), where \( k>0 \), is given by
\[
\big[ \varepsilon \cup [ \Count(a^{r}, 0) \cdot B ] \big] \cdot \CountPS(a^{r}, k) \cdot \Multiple(a^{r}, n)^{\ast} \cdot \big[ [ B \cdot \Count(a^{r}, 0) ] \cup \varepsilon \big],
\]
and an expression representing \( \ModCount(a^{r}, 0, n) \) is given, with slight abuse of notation, by
\begin{align*}
\Count(a^{r}, 0) \cup \ModCount(a^{r}, n, n).
\end{align*}
Both of these expressions are of star-height one, and so the language \( \ModCount(a^{r}, k, n) \) is of star-height at most one.

A combination of the above constitutes a proof for the following lemma:

\begin{lem}
\label{lem:nonunaryPowerHeights}
Let \( A \) be a non-unary alphabet. For every natural number \( r \), the language \( \Count(a^{r}, k) \) is of star-height zero, and the language \( \ModCount(a^{r}, k, n) \) is of star-height at most one. \qed
\end{lem}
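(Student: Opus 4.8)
The plan is to reduce the counting of $a^r$ to the unary setting of Section~\ref{sec:countingUnaryAlphabet} by localising every occurrence of $a^r$ inside a maximal block of consecutive $a$'s. The case $k=0$ is immediate: $\Count(a^r,0)$ is star-free by \eqref{eq:Count0}, so all the work lies in constructing suitable expressions for $k>0$ and for the modular count.

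Since $a^r$ has nonempty maximal border, the central difficulty — and the only real departure from Lemma~\ref{lem:nonunaryEmptyBorderHeights} — is that consecutive occurrences of $a^r$ can share letters, so a single run of $a$'s carries several overlapping copies. First I would record that the number of occurrences inside a block $a^s$ is governed entirely by the unary formula \eqref{eq:Count(a^r,k)}, namely that $a^s$ contributes $s-r+1$ copies once $s\geq r$. To keep the blocks apart, I would introduce the buffer set $B=A\setminus\{a\}$ and a ``wedge'' language $W$ of nonempty words that begin and end with a letter of $B$ and avoid $a^r$; by construction $W$ is star-free and no occurrence of $a^r$ can straddle a wedge. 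Then every word with exactly $k$ occurrences of $a^r$ and with $a^r$ as a border decomposes uniquely as $a^{k_1}Wa^{k_2}W\cdots Wa^{k_j}$ with each $k_i\geq r$, and summing the per-block counts forces $k_1+\cdots+k_j=k+(r-1)j$. This gives a finite, hence star-free, union for $\CountPS(a^r,k)$, and bordering it with an optional $\Count(a^r,0)\cdot B$ on the left and $B\cdot\Count(a^r,0)$ on the right — the buffer preventing an accidental lengthening of the extremal blocks — yields a star-free expression for $\Count(a^r,k)$, so that language is of star-height zero.

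For the modular count the strategy is to read off the first $k$ occurrences with $\CountPS(a^r,k)$ and then repeat groups of $n$ occurrences under a single star. The subtlety I expect to be the main obstacle is precisely the overlap at the seam between two groups: a block ending in $a^r$ leaves a trailing $a^{r-1}$ that can combine with the leading $a$'s of the next group to manufacture further occurrences. To absorb this I would define $\Multiple(a^r,n)$ as the words contributing exactly $n$ occurrences once prefixed by $a^{r-1}$ and themselves ending in $a^r$, so that the carried suffix is accounted for and is handed on intact to the next iteration; splitting such a word at its first wedge and applying \eqref{eq:Count(a^r,k)} again expresses $\Multiple(a^r,n)$ as a finite union of terms $a^iW\cdot\CountPS(a^r,n-i)$ together with the pure block $a^n$. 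Starring $\Multiple(a^r,n)$ and reattaching the same optional $\Count(a^r,0)\cdot B$ padding then gives a star-height-one expression for $\ModCount(a^r,k,n)$ with $k>0$, while $\ModCount(a^r,0,n)$ is handled by the disjoint union $\Count(a^r,0)\cup\ModCount(a^r,n,n)$. Checking that these unions are genuinely exact — in particular that the seam bookkeeping neither double-counts nor omits an occurrence of $a^r$ — is the step that demands care, but once the wedge-and-buffer discipline is in place it reduces to repeated application of the unary count \eqref{eq:Count(a^r,k)}.
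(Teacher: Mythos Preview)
Your proposal is correct and follows essentially the same approach as the paper: the buffer set $B$, the wedge language $W$, the decomposition of $\CountPS(a^r,k)$ via the constraint $k_1+\cdots+k_j=k+(r-1)j$, the optional $\Count(a^r,0)\cdot B$ padding, and the seam-handling language $\Multiple(a^r,n)$ with its expression $a^n\cup\bigcup_i a^iW\cdot\CountPS(a^r,n-i)$ are all exactly what the paper does. The only cosmetic difference is that the paper writes $W=B\cup(B\cdot\Count(a^r,0)\cdot B)$ explicitly, whereas you describe it verbally, but the content is identical.
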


\subsection{Discussion} 
\label{sec:wordLengthDiscussion}

Based on the results presented so far, it is natural to ask whether the language \( \Count(w, k) \) is of star-height zero and whether the language \( \ModCount(w, k, n) \) is of star-height at most one for all words \( w \). 
As a consequence of the foregoing results, this is certainly the case for words of length \( \leq 2 \): indeed every such word is either a power of a letter or has maximal border \( \varepsilon \).

\begin{prop}
\label{prop:heightLengthTwoWords}
Let \( A \) be an alphabet. For any word \( w \) in \( A^{+} \) with \( |w| \leq 2 \), the language \( \Count(w, k) \) is of star-height zero, and the language \( \ModCount(w, k, n)\) is of star-height at most one. \qed
\end{prop}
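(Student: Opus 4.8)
The plan is to reduce the proposition entirely to the three lemmas of Sections~\ref{sec:countingUnaryAlphabet}--\ref{sec:countPowerNonunary} by a short case analysis on the shape of \( w \) and on whether the alphabet \( A \) is unary. The structural fact driving the argument, already flagged in the discussion above, is that every word \( w \) with \( |w| \leq 2 \) falls into one of the two combinatorial situations we have analysed: either \( w \) is a power of a single letter, or \( w \) has maximal border \( \varepsilon \). Once this dichotomy is established, the required bounds \( h(\Count(w,k)) = 0 \) and \( h(\ModCount(w,k,n)) \leq 1 \) are read off directly from the appropriate lemma.

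First I would dispose of the unary case: if \( A = \{a\} \), then every word over \( A \) is of the form \( a^{r} \), and Lemma~\ref{lem:unaryAlphabetHeights} gives the claim immediately. So assume \( A \) is non-unary and split on \( |w| \). If \( |w| = 1 \), say \( w = a \), then \( a \) has maximal border \( \varepsilon \) (its only proper prefix, and its only proper suffix, is \( \varepsilon \)), so Lemma~\ref{lem:nonunaryEmptyBorderHeights} applies; alternatively one may view \( a = a^{1} \) and invoke Lemma~\ref{lem:nonunaryPowerHeights} with \( r = 1 \). If \( |w| = 2 \), write \( w = a_{1}a_{2} \) and distinguish two sub-cases. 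When \( a_{1} = a_{2} \), we have \( w = a^{2} \), a power of a letter, and Lemma~\ref{lem:nonunaryPowerHeights} applies. When \( a_{1} \neq a_{2} \), I would check that the maximal border is \( \varepsilon \): the proper prefixes of \( w \) are \( \varepsilon \) and \( a_{1} \), the proper suffixes are \( \varepsilon \) and \( a_{2} \), and since \( a_{1} \neq a_{2} \) the only common one is \( \varepsilon \); hence Lemma~\ref{lem:nonunaryEmptyBorderHeights} applies.

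The only genuine verification in all of this — and the closest thing to an obstacle — is making sure the case split is both exhaustive and correctly assigned, in particular that a length-two word is routed to the right lemma. The salient point is that \( a^{2} \) does \emph{not} have maximal border \( \varepsilon \): its proper prefix \( a \) coincides with its proper suffix \( a \), so \( a^{2} \) self-overlaps and must be handled by the power-of-a-letter analysis of Lemma~\ref{lem:nonunaryPowerHeights} rather than by the empty-border analysis of Lemma~\ref{lem:nonunaryEmptyBorderHeights}. Once the word \( w \) is placed in the correct one of the categories ``power of a letter'' or ``maximal border \( \varepsilon \)'', the stated star-height bounds follow with no further computation, completing the proof.
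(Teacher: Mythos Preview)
Your proposal is correct and matches the paper's own argument exactly: the proposition is stated with a \qed because it follows immediately from the observation that every word of length at most two is either a power of a letter or has maximal border \(\varepsilon\), whence Lemmas~\ref{lem:unaryAlphabetHeights}, \ref{lem:nonunaryEmptyBorderHeights} and \ref{lem:nonunaryPowerHeights} cover all cases. Your write-up simply spells out this case split in more detail than the paper does.
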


When the subword under consideration is of length three we are presented with a new hurdle to overcome. The possible types for words of length three are
\begin{align*}
aaa, \qquad aab, \qquad aba, \qquad abb, \qquad abc,
\end{align*}
where \( a, b \) and \( c \) are distinct letters in \( A \). Counting occurrences of the word \( aaa \) is covered by Lemmas~\ref{lem:unaryAlphabetHeights} and \ref{lem:nonunaryPowerHeights}, while the words
 \( aab \), \( baa \) and \( abc \) are covered by Lemma~\ref{lem:nonunaryEmptyBorderHeights}. 

With the final type, namely \( aba \), we must be more careful as the maximal border in this case is \( a \), meaning that the suffix \( a \) can act as a prefix \( a \) in a new occurrence of the subword. 
For example, the word \( abababa \) contains three occurrences of the subword \( aba \). 
However, we can proceed in a similar manner to that in Section~\ref{sec:countPowerNonunary} to resolve this issue.

Define \( W \) to be the set of words that are not \( b \), do not have prefix \( ba \), do not have suffix \( ab \), and do not contain \( aba \) as a subword; that is,
\begin{align*}
W = ( b \cup baA^{\ast} \cup A^{\ast}ab \cup A^{\ast}abaA^{\ast} )^{c} = ( b \cup ba\emptyset^{c} \cup \emptyset^{c}ab \cup \emptyset^{c}aba\emptyset^{c} )^{c}.
\end{align*}
Then, a general formula for \( \CountPS(aba, k) \), where \( k \) is a natural number, is given by
\begin{align*}
\CountPS(aba, k) =
\bigcup_{j=1}^{k}
\bigcup_{\substack{k_{1}, k_{2}, \dots, k_{j} \geq 1 \\ k_{1} + k_{2} + \dots + k_{j} = k}}
a(ba)^{k_{1}} W a(ba)^{k_{2}} W \dots W a(ba)^{k_{j}},
\end{align*}
which is star-free, and the language \( \Count(aba, k) \), expressed by
\begin{align*}
(\emptyset^{c}aba\emptyset^{c} \cup \emptyset^{c}ab)^{c} \cdot \CountPS(aba, k) \cdot (ba\emptyset^{c} \cup \emptyset^{c}aba\emptyset^{c})^{c},
\end{align*}
is of star-height zero.

To find an expression for \( \ModCount(aba, k, n) \) we introduce the language
\begin{align*}
\Multiple(aba, n) = \{ w \in A^{\ast} \mid |aw|_{aba} = n \text{ and } w \text{ has suffix } aba \}.
\end{align*}
A star-free expression representing \( \Multiple(aba, n) \) is given by
\begin{align*}
(ba)^{n} \cup \bigcup_{i = 1}^{n-1} (ba)^{i} W \cdot \CountPS(aba, n-i).
\end{align*}
Putting all of this together, an expression representing \( \ModCount(aba, k, n) \), where \( k > 0 \), is given by
\begin{align*}
(\emptyset^{c}aba\emptyset^{c} \cup \emptyset^{c}ab)^{c} \cdot \CountPS(aba, k) \cdot \Multiple(aba, n)^{\ast} \cdot (ba\emptyset^{c} \cup \emptyset^{c}aba\emptyset^{c})^{c},
\end{align*}
and an expression representing \( \ModCount(aba, 0, n) \) is given, with slight abuse of notation, by
\begin{align*}
\Count(aba, 0) \cup \ModCount(aba, n, n).
\end{align*}
This establishes that the language \( \ModCount(aba, k, n) \) is of star-height at most one.

Hence, we have proven the following result:

\begin{prop}
\label{prop:heightLengthThreeWords}
Let \( A \) be an alphabet. For any word \( w \) in \( A^{+} \) with \( |w| \leq 3 \), the language \( \Count(w, k) \) is of star-height zero, and the language \( \ModCount(w, k, n)\) is of star-height at most one. \qed
\end{prop}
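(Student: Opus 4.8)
The plan is to reduce the statement to the three cases already treated in Section~\ref{sec:countingSubwords} and to isolate the single new phenomenon that length-three words introduce. Words with $|w| \le 2$ are covered by Proposition~\ref{prop:heightLengthTwoWords}, so I may assume $|w| = 3$. Up to renaming letters, the possible types are $aaa$, $aab$, $aba$, $abb$ and $abc$ with $a,b,c$ distinct. A word of the form $aaa$ is a power of a single letter and is therefore covered by Lemma~\ref{lem:unaryAlphabetHeights} when $A$ is unary and by Lemma~\ref{lem:nonunaryPowerHeights} otherwise. Each of $aab$, $abb$ and $abc$ has maximal border $\varepsilon$, so occurrences cannot overlap, and these fall under Lemma~\ref{lem:nonunaryEmptyBorderHeights}. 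This leaves exactly one genuinely new type, namely $aba$, whose maximal border is the nonempty word $a$; here the suffix $a$ of one occurrence can serve as the prefix $a$ of the next, so occurrences overlap, as in $abababa$.

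For $aba$ I would imitate the wedge-and-block strategy of Section~\ref{sec:countPowerNonunary}. The maximal runs of the alternating pattern now take the form $a(ba)^{m}$, each of which has $aba$ as a border and, by a direct index count, contains exactly $m$ occurrences of $aba$. I would introduce the star-free ``wedge'' language $W$ consisting of the words that are not $b$, have neither prefix $ba$ nor suffix $ab$, and do not contain $aba$ as a subword. With this, I would write $\CountPS(aba,k)$ --- the words having $aba$ as a border and containing exactly $k$ occurrences --- as the finite union over $j$ and over compositions $k_1 + \dots + k_j = k$ with each $k_i \ge 1$ of the products $a(ba)^{k_1} W a(ba)^{k_2} W \dots W a(ba)^{k_j}$. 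The general language $\Count(aba,k)$ is then recovered by flanking this central term with star-free buffers that absorb any initial or terminal material not contributing to an occurrence, exactly as in the power-of-a-letter case. Every ingredient is star-free, so $\Count(aba,k)$ has star-height zero.

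For the modular count I would introduce the auxiliary language $\Multiple(aba,n)$ of words $w$ with $|aw|_{aba} = n$ and suffix $aba$; prepending the border $a$ and demanding a terminal $aba$ makes each block account for exactly $n$ occurrences while handing a fresh overlapping suffix to the next block. I would justify the star-free expression $(ba)^n \cup \bigcup_{i=1}^{n-1}(ba)^i W \cdot \CountPS(aba,n-i)$ for it, and then assemble $\ModCount(aba,k,n)$ for $k>0$ by starring $\Multiple(aba,n)$ after the initial count $\CountPS(aba,k)$, again enclosed by star-free buffers. The residual case $\ModCount(aba,0,n)$ I would handle via the identity $\Count(aba,0) \cup \ModCount(aba,n,n)$. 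Since only one star occurs in each expression, the star-height is at most one.

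The main obstacle is the boundary bookkeeping that justifies these decompositions, not any star-height estimate. I must verify that the defining conditions on $W$ are precisely those needed so that concatenating a block $a(ba)^{k_i}$, which ends in $a$, with $W$ and then the next block, which begins with $a$, neither creates a spurious occurrence of $aba$ across a seam nor permits an alternating run to be extended: forbidding $W=b$ rules out the seam $a\cdot b\cdot a$, forbidding prefix $ba$ keeps the left block maximal, forbidding suffix $ab$ keeps the right block maximal, and forbidding $aba$ inside $W$ excludes hidden occurrences. Confirming that each block contributes exactly $k_i$ occurrences, that these sum to $k$, and the converse inclusion that every word with the prescribed count admits such a decomposition, is the essential content, directly parallel to the verifications carried out for $\CountPS(a^{r},k)$ and $\Multiple(a^{r},n)$ in Section~\ref{sec:countPowerNonunary}.
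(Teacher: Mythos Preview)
Your proposal is correct and follows essentially the same approach as the paper: the same case split on length-three types, the same wedge language $W$ for $aba$, the same decomposition of $\CountPS(aba,k)$ into alternating blocks $a(ba)^{k_i}$ separated by $W$, the same auxiliary $\Multiple(aba,n)$ with the same star-free description, and the same assembly of $\ModCount(aba,k,n)$ and its $k=0$ variant. The only place you are less explicit than the paper is in specifying the flanking buffers for $\Count(aba,k)$ --- the paper uses $(\emptyset^{c}aba\emptyset^{c}\cup\emptyset^{c}ab)^{c}$ on the left and $(ba\emptyset^{c}\cup\emptyset^{c}aba\emptyset^{c})^{c}$ on the right --- but your reference to the power-of-a-letter template makes the intended form clear.
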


It should be noted that Proposition~\ref{prop:heightLengthThreeWords} can also be proved using existing theoretical results. We briefly outline the proof strategy below.

Let \( A \) and \( X = \{x\} \) be alphabets, and consider the languages \( L = \ModCount(x, k, n) = x^{k}(x^{n})^{\ast} \) over \( X \) and \( K = \ModCount(w, k, n) \) over \( A \). Define a function \( f_{w} : A^{\ast} \to X^{\ast} \) by \( f_w(v) = x^{|v|_{w}} \). It is easy to show that \( K = Lf_{w}^{-1} \).

Note that for all words \(w \) with \( |w| \leq 3 \), \( f_{w} \) is a generalised sequential function (in the sense of Eilenberg \cite[p.~299]{eilenberg74}). For example, a transducer realising \( f_{aba} \) is shown in Figure~\ref{fig:transducer}. In this diagram, edges labelled with \( c|\varepsilon \), where \( c \in A \setminus \{a,b\} \), have been removed for clarity, since all of these edges point directly to the initial state.

Standard calculations show that the transition monoid of each transducer realising \( f_{w} \), where \( |w| \leq 3 \) is aperiodic. Moreover, the transition monoid of the automaton recognising \( L \) is an abelian group by \ref{lem:commGrpIffBool} and \ref{prop:commGrpHeightOne}. Hence, by Eilenberg \cite[Proposition IX.1.1]{eilenberg76} (suitably modified to deal with generalised sequential functions), the transition monoid of \( K \) divides a wreath product of an abelian group by an aperiodic monoid. Since all languages that belong to the pseudovariety generated by wreath products of abelian groups by aperiodic monoids have star-height at most one \cite[Theorem 7.8]{pin92}, we conclude that \( K \) is of star-height at most one.

\begin{figure}
\begin{center}
\includegraphics{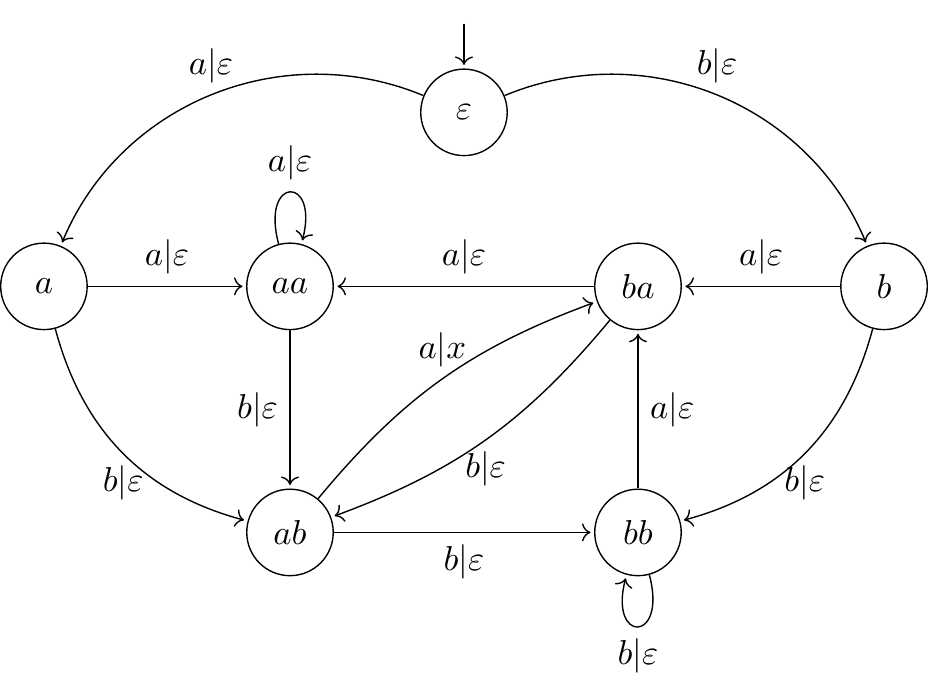}
\end{center}
\caption{A finite state transducer realising \( f_{aba} \).}
\label{fig:transducer}
\end{figure}

\section{Applications to Rees zero-matrix semigroups}
\label{sec:ReesMatrix}

In this section, we change tack and use our combinatorial results to prove new results in an algebraic setting. Specifically, we show that languages recognised by Rees zero-matrix semigroups over abelian groups are of star-height at most one.

Let \( S \) be a semigroup without zero. Let \( I \) and \( \Lambda \) be non-empty indexing sets and let \( P \) be a \( \left\vert{\Lambda}\right\vert \times \left\vert{I}\right\vert \) matrix with entries from \( S \cup \{\mathbf{0}\} \), where \( \mathbf{0} \) is a new symbol not in \( S \). The \emph{Rees zero-matrix semigroup} \( M^{0}[S; I, \Lambda; P] \) is the set \( (I \times S \times \Lambda) \cup \{\mathbf{0}\} \) equipped with the binary operation defined by
\begin{align*}
(i, s, \lambda)(j, t, \mu) =
\begin{cases}
(i, sp_{\lambda j}t, \mu) & \text{if } p_{\lambda j} \neq \mathbf{0}, \\
\mathbf{0} & \text{if } p_{\lambda j} = \mathbf{0},
\end{cases}
\end{align*}
and \( s\mathbf{0} = \mathbf{0} = \mathbf{0}s \) for all \( s \) in \( S \cup \{\mathbf{0}\} \). If we disregard the new symbol \( \mathbf{0} \) but leave everything else intact then the resulting semigroup, denoted by \( M[S; I, \Lambda; P] \), is simply a \emph{Rees matrix semigroup}. Throughout the rest of this section we work in full generality with Rees zero-matrix semigroups. The results of course remain true when restricted to Rees matrix semigroups.

We say that the matrix $P$ is \emph{regular} if each row and column contain a non-zero entry.
Rees zero-matrix semigroups with finite underlying groups and regular matrices are precisely finite $0$-simple semigroups according to Rees' Theorem \cite[Theorem 3.2.3]{howie95}. In turn, these semigroups together with zero semigroups completely exhaust principal factors of arbitrary finite semigroups.

We begin by exploring which languages are recognised by Rees zero-matrix semigroups over cyclic groups \( \mathbb{Z}_{n} \), where \( n \) is a natural number, and then extend this to arbitrary abelian groups via the Fundamental Theorem for Finite Abelian Groups.

\subsection{Rees zero-matrix semigroups over cyclic groups}
\label{sec:ReesMatrixCyclic}

Let \( S = M^{0}[\mathbb{Z}_{n}; I, \Lambda; P] \) be a Rees zero-matrix semigroup, where the zero in \( S \) is denoted by \( \mathbf{0} \) and the identity in \( \mathbb{Z}_{n} \) is denoted by \( 0 \). Let \( A \) be an alphabet and define a map \( \varphi : A \to S \) by either \( a\varphi = \mathbf{0} \) or \( a\varphi = (i_{a}, g_{a}, \lambda_{a}) \), where \( 0 \leq g_{a} < n \). Let 
\[ 
A_{(i, g, \lambda)} =(i,g,\lambda)\varphi^{-1} \qquad \text{and} \qquad  A_{\mathbf{0}}=\mathbf{0}\varphi^{-1}.
\] 
Uniquely extend \( \varphi \) to a morphism \( \bar{\varphi} : A^{+} \to S \). 

Now, consider the image of $w=a_1a_2\dots a_r$ under $\bar{\varphi}$.
If \( a_{t}\bar{\varphi}=\mathbf{0} \) for at least one \( t \in \{1, 2, \dots, r\} \) then 
\( w \bar{\varphi} = \mathbf{0} \). Likewise, if  \( p_{\lambda_{a_{t}} i_{a_{t+1}}} = \mathbf{0} \) for at least one \( t \in \{1, 2, \dots, r-1\} \) then \( w \bar{\varphi} = \mathbf{0} \). 
Otherwise, if \( a_{t} \varphi \neq \mathbf{0} \) for all \( t \in \{1, 2, \dots, r\} \) and \( p_{\lambda_{a_{t}} i_{a_{t+1}}} \neq \mathbf{0} \) for all \( t \in \{1, 2, \dots, r-1\} \), then
\begin{align*}
w\bar{\varphi} &=
(i_{a_{1}}, g_{a_{1}}, \lambda_{a_{1}})(i_{a_{2}}, g_{a_{2}}, \lambda_{a_{2}}) \dots (i_{a_{r}}, g_{a_{r}}, \lambda_{a_{r}}) \\
&= (i_{a_{1}}, g_{a_{1}} + p_{\lambda_{a_{1}}i_{a_{2}}} + g_{a_{2}} + p_{\lambda_{a_{2}}i_{a_{3}}} + \dots + p_{\lambda_{a_{r-1}}i_{a_{r}}} + g_{a_{r}}, \lambda_{a_{r}}).
\end{align*}

We proceed by finding regular expressions for preimages of elements in \( S \). We split into two cases: the preimage of the zero \( \mathbf{0} \) and the preimage of an arbitrary non-zero element \( s = (i, g, \lambda) \).

\begin{lem}
\label{lem:preimageOfZero}
With the notation as above, \( \mathbf{0}\bar{\varphi}^{-1} \) is of star-height zero.
\end{lem}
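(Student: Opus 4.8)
The plan is to read off the answer directly from the case analysis of \( w\bar{\varphi} \) carried out just before the lemma. That analysis shows that a word \( w = a_{1} a_{2} \dots a_{r} \) is sent to \( \mathbf{0} \) precisely when one of two \emph{local} defects occurs: either some letter \( a_{t} \) already lies in \( A_{\mathbf{0}} = \mathbf{0}\varphi^{-1} \), or some adjacent pair \( a_{t} a_{t+1} \), with both letters outside \( A_{\mathbf{0}} \), has \( p_{\lambda_{a_{t}} i_{a_{t+1}}} = \mathbf{0} \). Conversely, if no letter maps to \( \mathbf{0} \) and every adjacent matrix entry is non-zero, the product is a genuine non-zero triple. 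Thus membership in \( \mathbf{0}\bar{\varphi}^{-1} \) is detected entirely by the presence of a forbidden letter or a forbidden factor of length two.

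First I would isolate the finite data governing these defects. With \( A_{\mathbf{0}} = \mathbf{0}\varphi^{-1} \) as above, let \( F \) be the finite set of two-letter words \( ab \) with \( a, b \in A \setminus A_{\mathbf{0}} \) and \( p_{\lambda_{a} i_{b}} = \mathbf{0} \). The characterisation just described then yields
\[
\mathbf{0}\bar{\varphi}^{-1} = A^{\ast} A_{\mathbf{0}} A^{\ast} \cup \bigcup_{ab \in F} A^{\ast} a b A^{\ast}.
\]

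Next I would put each term into star-free form. Using Observation~\ref{obs:subsetOfAlphabet} to rewrite \( A^{\ast} \) as \( \emptyset^{c} \), and regarding \( A_{\mathbf{0}} = \bigcup_{a \in A_{\mathbf{0}}} a \) as a finite union of single letters, each summand becomes a star-free expression of the shape \( \emptyset^{c} u \emptyset^{c} \) with \( u \) a word of length one or two. Since a finite union of star-free languages is star-free, and star-freeness is exactly star-height zero, the lemma follows.

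The verification is routine, and I do not expect a genuine obstacle: the only points needing a moment's care are boundary effects. As \( \bar{\varphi} \) is defined on \( A^{+} \), one must check that the displayed expression introduces no spurious empty word and that one-letter words (which have no adjacent pair) are caught by the first term alone; both are immediate. If anything, the conceptual step worth stating clearly is the reduction of the algebraic condition ``\( w\bar{\varphi} = \mathbf{0} \)'' to a finite list of forbidden factors of bounded length, which is what makes star-freeness available.
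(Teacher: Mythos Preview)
Your proposal is correct and follows essentially the same approach as the paper: both identify \( \mathbf{0}\bar{\varphi}^{-1} \) as the union of \( A^{\ast} A_{\mathbf{0}} A^{\ast} \) with terms \( A^{\ast} u A^{\ast} \) for length-two factors \( u \) triggering a zero matrix entry, and then invoke Observation~\ref{obs:subsetOfAlphabet}. The only cosmetic difference is that the paper indexes the second union by pairs of non-zero elements of \( S \) (using the preimage sets \( A_{(i,g,\lambda)} \)) rather than by individual letter pairs, which amounts to the same thing.
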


\begin{proof}
According to the analysis preceding the lemma, a word \( w = a_1a_2\dots a_r \) belongs to the preimage of \( \mathbf{0} \) if and only if at least one of the following holds:
\begin{enumerate}
\item $a_t$ lies in $A_\mathbf{0}$ for some $t$ in $\{1,2\dots,r\}$; or
\item $p_{\lambda j}=\mathbf{0}$, where $a_t\in A_{(i,g,\lambda)}$ and $a_{t+1}\in A_{(j,h,\mu)}$.
\end{enumerate}
It follows that
\[
\mathbf{0}\bar{\varphi}^{-1}=
A^{\ast}A_{\mathbf{0}}A^{\ast} \cup \Bigl[\bigcup A^{\ast}A_{(i, g, \lambda)}A_{(j, h, \mu)}A^{\ast}\Bigr],
\]
where the second union is taken over all $(i,g,\lambda),(j,h,\mu)\in S\setminus\{\mathbf{0}\}$ with 
$p_{\lambda j}=\mathbf{0}$,
a language of star-height zero by Observation \ref{obs:subsetOfAlphabet}.
\end{proof}

\begin{lem}
For a non-zero element $s=(i,g,\lambda)$ in $S$,  its preimage, \( s\bar{\varphi}^{-1} \), is of star-height at most one.
\end{lem}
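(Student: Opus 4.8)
The plan is to decompose the preimage \( s\bar\varphi^{-1} \) as an intersection of four languages, three of which are patently of star-height zero and the fourth of which is governed by the subword-counting results of Section~\ref{sec:countingSubwords}. First I would dispose of the words mapping to \( \mathbf{0} \): since \( s \neq \mathbf{0} \), every word in \( s\bar\varphi^{-1} \) lies in the complement \( (\mathbf{0}\bar\varphi^{-1})^{c} \), which is of star-height zero by Lemma~\ref{lem:preimageOfZero}. For a word \( w = a_{1}a_{2}\dots a_{r} \) in this complement, the computation displayed just before Lemma~\ref{lem:preimageOfZero} gives \( w\bar\varphi = (i_{a_{1}}, \Sigma(w), \lambda_{a_{r}}) \), where \( \Sigma(w) \) denotes the group sum. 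Thus the first and third coordinates are controlled entirely by the first and last letters of \( w \): matching them against \( i \) and \( \lambda \) amounts to intersecting with \( UA^{\ast} \) and \( A^{\ast}V \), where \( U \) (respectively \( V \)) is the set of letters whose image has first (respectively third) coordinate equal to \( i \) (respectively \( \lambda \)). Both of these languages are of star-height zero by Observation~\ref{obs:subsetOfAlphabet}.

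The crux is the middle coordinate. Reading off \( \Sigma(w) \) from that computation, one sees that it is the sum of a single-letter part \( \sum_{t=1}^{r} g_{a_{t}} \) and a consecutive-pair part \( \sum_{t=1}^{r-1} p_{\lambda_{a_{t}} i_{a_{t+1}}} \). The key observation is that, for \( w \in (\mathbf{0}\bar\varphi^{-1})^{c} \), these are precisely weighted counts of subwords of length one and two:
\[
\Sigma(w) \equiv \sum_{a} g_{a}\, |w|_{a} \;+\; \sum_{(a,b)\,:\, p_{\lambda_{a} i_{b}} \neq \mathbf{0}} p_{\lambda_{a} i_{b}}\, |w|_{ab} \pmod{n},
\]
where the first sum runs over letters with non-zero image and the second over ordered pairs \( (a,b) \) of such letters with \( p_{\lambda_{a} i_{b}} \neq \mathbf{0} \). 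The restriction on the second sum is harmless, since a pair \( ab \) with \( p_{\lambda_{a} i_{b}} = \mathbf{0} \) cannot occur as a factor of a word in the complement, whence \( |w|_{ab} = 0 \) there. Consequently the condition \( \Sigma(w) \equiv g \pmod{n} \) depends only on the residues modulo \( n \) of the finitely many counts \( |w|_{a} \) and \( |w|_{ab} \). I would therefore write the corresponding language as the finite union, taken over all assignments of residues in \( \mathbb{Z}_{n} \) to these counts that satisfy the single linear congruence, of the finite intersection of the matching languages \( \ModCount(a, \cdot, n) \) and \( \ModCount(ab, \cdot, n) \). Since \( |ab| \leq 2 \), each such language is of star-height at most one by Proposition~\ref{prop:heightLengthTwoWords}, and the boolean combination preserves this bound.

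Assembling the pieces, \( s\bar\varphi^{-1} \) equals the intersection of \( (\mathbf{0}\bar\varphi^{-1})^{c} \), \( UA^{\ast} \), \( A^{\ast}V \), and the congruence language just described; the first three are of star-height zero and the last of star-height at most one, so the intersection is of star-height at most one, as required. The main obstacle I anticipate is bookkeeping rather than conceptual: one must verify carefully that the weighted-count formula for \( \Sigma(w) \) is exactly correct on \( (\mathbf{0}\bar\varphi^{-1})^{c} \) — in particular that the zero matrix entries genuinely drop out — and confirm that intersecting the congruence language with \( (\mathbf{0}\bar\varphi^{-1})^{c} \) removes any word that maps to \( \mathbf{0} \) yet happens to satisfy the numerical congruence.
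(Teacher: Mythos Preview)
Your proposal is correct and follows essentially the same approach as the paper: control the first and third coordinates by prefix and suffix conditions of star-height zero, and handle the middle coordinate by expressing the group sum as a weighted count of length-one and length-two subwords, then invoke Proposition~\ref{prop:heightLengthTwoWords} on the resulting boolean combination of $\ModCount$ languages. The only cosmetic differences are that the paper packages the three coordinate conditions as preimages of the sets $\{i\}\times\mathbb{Z}_n\times\Lambda$, $I\times\{g\}\times\Lambda$, $I\times\mathbb{Z}_n\times\{\lambda\}$ rather than intersecting explicitly with $(\mathbf{0}\bar\varphi^{-1})^{c}$, and it separates the congruence into a ``group'' part and a ``matrix'' part before recombining via $g_1+g_2\equiv g$; your explicit handling of the zero matrix entries is arguably cleaner than the paper's.
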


\begin{proof}
We begin by writing \( s\bar{\varphi}^{-1} \) as the intersection of three regular languages as follows:
\begin{align}
\label{eq:sPreimageDecomp}
s\bar{\varphi}^{-1} = \left( \{i\} \times \mathbb{Z}_{n} \times \Lambda \right)\bar{\varphi}^{-1} \cap \left( I \times \{g\} \times \Lambda \right)\bar{\varphi}^{-1} \cap \left( I \times \mathbb{Z}_{n} \times \{\lambda\} \right)\bar{\varphi}^{-1}.
\end{align}

Due to the nature of the multiplication on \( S \), it is clear to see that
\begin{eqnarray*}
&&
\left( \{i\} \times \mathbb{Z}_{n} \times \Lambda \right)\bar{\varphi}^{-1} = \bigg[ \bigcup_{h \in \mathbb{Z}_{n}, \mu \in \Lambda} A_{(i, h, \mu)} \bigg] \cdot A^{\ast},
\\
&&\left( I \times \mathbb{Z}_{n} \times \{\lambda\} \right)\bar{\varphi}^{-1} = A^{\ast} \cdot \bigg[ \bigcup_{j \in I, h \in \mathbb{Z}_{n}} A_{(j, h, \lambda)} \bigg].
\end{eqnarray*}
By Observation~\ref{obs:subsetOfAlphabet}, these languages have star-height zero.

It remains to find an expression for 
\( \left( I \times \{g\} \times \Lambda \right)\bar{\varphi}^{-1} \). 
Consider an arbitrary \( w=a_1a_2\dots a_r \) belonging to this language.
Continuing to use the notation introduced before Lemma \ref{lem:preimageOfZero},
we know that \( p_{\lambda_{a_t} i_{a_{t+1}}}\neq \mathbf{0} \) for \( t = 1, 2, \dots, r-1 \), and
\begin{align*}
g_{a_{1}} + p_{\lambda_{a_{1}}i_{a_{2}}} + g_{a_{2}} + p_{\lambda_{a_{2}}i_{a_{3}}} + \dots + p_{\lambda_{a_{r-1}}i_{a_{r}}} + g_{a_{r}} &\equiv g \pmod n.
\end{align*}
We split the above sum into two:
\begin{align*}
\underbrace{g_{a_{1}} + g_{a_{2}} + \dots + g_{a_{r}}}_{ {} \equiv g_{1} \pmod n} + \underbrace{p_{\lambda_{a_{1}}i_{a_{2}}} + p_{\lambda_{a_{2}}i_{a_{3}}} + \dots + p_{\lambda_{a_{r-1}}i_{a_{r}}}}_{ {} \equiv g_{2} \pmod n} &\equiv g \pmod n,
\end{align*}
and we examine them separately. The first sum corresponds to the contributions from `group' summands, while the second is the contributions from `matrix' summands.

For the group contribution, we consider the congruence given by
\[
g_{a_{1}} + g_{a_{2}} + \dots + g_{a_{r}} \equiv g_{1} \pmod n.
\]
Grouping together summands corresponding to the same letter, we see that the above congruence is equivalent to
\[
\sum_{a \in A} g_{a} |w|_{a} \equiv g_{1} \pmod n,
\]
which, in turn, is equivalent to
\[
\sum_{a \in A} g_{a} (|w|_{a} \pmod n) \equiv g_{1} \pmod n.
\]
The point here is that while \( |w|_a \) can take infinitely many values, the same is not true for \( |w|_a\pmod{n} \). More formally, let  \( T \) be the following set of tuples of elements \( \{0, 1, \dots, n-1\} \) indexed by \( A \):
\[
T = \{(k_{a})_{a \in A} \mid \sum_{a \in A} g_{a}k_{a} \equiv g_{1} \pmod n\}.
\]
For any fixed tuple \( (k_{a})_{a \in A} \) in \( T \), every word \( w \) such that \( |w|_a\equiv k_a\pmod{n}\), where \( a \) lies in \( A \), will have group contribution equal to \( g_{1} \bmod n \).
The set of all such words is obtained
by forming the finite intersection  of the languages \( \ModCount(a, k_{a}, n) \) for \( a \in A \). 
Taking the finite union over all tuples in \( T \) results in the expression
\[
\GrpContribution(g_{1}, n) =
\bigcup_{(k_{a}) \in T} \bigcap_{a \in A} \ModCount(a, k_{a}, n),
\]
which is of star-height at most one, since \( \ModCount(a, k_{a}, n) \) is of star-height at most one by Lemmas~\ref{lem:unaryAlphabetHeights} and \ref{lem:nonunaryEmptyBorderHeights}.

In a similar fashion, we consider the contributions made by `matrix' summands; that is, we consider the congruence given by
\[
p_{\lambda_{a_{1}}i_{a_{2}}} + p_{\lambda_{a_{2}}i_{a_{3}}} + \dots + p_{\lambda_{a_{r-1}}i_{a_{r}}} \equiv g_{2} \pmod n.
\]
Counting the contribution of each matrix entry separately, we see that the above congruence is equivalent to
\[
\sum_{ab \in A^{2}} p_{\lambda_{a} i_{b}} |w|_{ab} \equiv g_{2} \pmod n,
\]
which, in turn, is equivalent to
\[
\sum_{ab \in A^{2}} p_{\lambda_{a} i_{b}} (|w|_{ab} \pmod n) \equiv g_{2} \pmod n.
\]
Consider the finite family $U$ of tuples \( (k_{ab})_{ab \in A^{2}} \) of elements\( \{0, 1, \dots, n-1 \}\), indexed by $A^2$:
\[
U = \{(k_{ab})_{ab \in A^{2}} \mid \sum_{ab \in A^{2}} p_{\lambda_{a}i_{b}}k_{ab} \equiv g_{2} \pmod n\}.
\]
For a fixed tuple in \( U \), the set of all words $w$ satisfying $|w|_{ab}\equiv k_{ab}\pmod{n}$, where $ab$ lies in $A^2$,
is obtained by taking the finite intersection of the languages \( \ModCount(ab, k_{ab}, n) \). 
Taking the union over all tuples in \( U \) yields
\[
\MatContribution(g_{2}, n) =
\bigcup_{(k_{ab})_{ab \in A^{2}} \in U} \bigcap_{ab \in A^{2}} \ModCount(ab, k_{ab}, n),
\]
which is of star-height at most one, since \( \ModCount(ab, k_{ab}, n) \) is of star-height at most one by Proposition~\ref{prop:heightLengthTwoWords}.

Combining the `group' contribution and the `matrix' contribution appropriately leads to
\[
\left( I \times \{g\} \times \Lambda \right)\bar{\varphi}^{-1} =
\bigcup_{\substack{(g_{1}, g_{2}) \in \mathbb{Z}_{n}^{2} \\ g_{1} + g_{2} \equiv g \pmod n}}  (\GrpContribution(g_{1}, n) \cap \MatContribution(g_2,n)),
\]
and completes the proof.
\end{proof}

An immediate consequence of the above propositions is the following theorem:

\begin{thm}
\label{thm:RMSOverCyclicHeightOne}
A regular language recognised by a Rees zero-matrix semigroup over a cyclic group is of star-height at most one.
\end{thm}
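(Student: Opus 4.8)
The plan is to obtain the theorem by assembling the two preimage lemmas just established. Writing \( L = X\bar{\varphi}^{-1} \) for the recognised language, where \( X \subseteq S \) and \( \bar{\varphi} : A^{+} \to S \) is the recognising morphism, the first step is to decompose \( L \) as a union of preimages of single elements:
\[
L = X\bar{\varphi}^{-1} = \bigcup_{s \in X} s\bar{\varphi}^{-1}.
\]
By Lemma~\ref{lem:preimageOfZero} the preimage \( \mathbf{0}\bar{\varphi}^{-1} \) has star-height zero, and by the preceding lemma the preimage of every non-zero element of \( S \) has star-height at most one; so each term in the union has star-height at most one.

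The one point requiring care is that this union is \emph{a priori} infinite, since the index sets \( I \) and \( \Lambda \) (and hence \( S \) itself) need not be finite, whereas the identity \( h(E \cup F) = \max\{h(E), h(F)\} \) only lets us control \emph{finite} unions. I would resolve this by observing that only finitely many of the preimages \( s\bar{\varphi}^{-1} \) are non-empty. Indeed, from the description of \( w\bar{\varphi} \) given just before Lemma~\ref{lem:preimageOfZero}, any non-zero image \( w\bar{\varphi} \), where \( w = a_{1} \cdots a_{r} \), has first coordinate \( i_{a_{1}} \) and last coordinate \( \lambda_{a_{r}} \), which lie respectively in the finite sets \( \{ i_{a} \mid a \in A,\ a\varphi \neq \mathbf{0}\} \) and \( \{\lambda_{a} \mid a \in A,\ a\varphi \neq \mathbf{0}\} \), while the middle coordinate lies in the finite group \( \mathbb{Z}_{n} \). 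Hence the image of \( \bar{\varphi} \) meets \( S \setminus \{\mathbf{0}\} \) in a finite set, and \( s\bar{\varphi}^{-1} = \emptyset \) for all but finitely many \( s \).

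It therefore suffices to take the union over the finite set \( X' = \{ s \in X \mid s\bar{\varphi}^{-1} \neq \emptyset\} \), giving
\[
L = \bigcup_{s \in X'} s\bar{\varphi}^{-1},
\]
a finite union of languages each of star-height at most one. Repeated application of \( h(E \cup F) = \max\{h(E), h(F)\} \) then yields \( h(L) \leq 1 \). The substantive work has already been carried out in the two preimage lemmas (and, through them, in the combinatorial results of Section~\ref{sec:countingSubwords}); the only genuine obstacle remaining at this stage is the finiteness reduction above, which is what legitimises the passage from ``each preimage has small star-height'' to ``the whole language has small star-height'' in the case where the index sets are permitted to be infinite.
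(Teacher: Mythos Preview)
Your proof is correct and follows the same approach as the paper: decompose the recognised language as a union of single-element preimages and invoke the two preceding lemmas. The paper simply asserts that this union is finite (implicitly treating the Rees zero-matrix semigroup as finite), whereas you supply an explicit argument showing that only finitely many preimages are non-empty even when \(I\) and \(\Lambda\) are allowed to be infinite; this is a harmless strengthening rather than a different route.
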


\begin{proof}
Every language recognised by a Rees zero-matrix semigroup over a cyclic group can be expressed as a finite union of preimages of elements in the semigroup. Since each individual preimage is of star-height at most one and taking finite unions does not increase star-height, the result follows.
\end{proof}

\begin{comment}
We clarify some of the ideas expressed in the above two results by working through an example.

\begin{eg}
Let \( S = M[\mathbb{Z}_{3}; I, \Lambda; P] \) be a Rees matrix semigroup with \( I = \Lambda = \{1,2\} \) and \( P \) given by \( p_{11} = 0, p_{12} = 1, p_{21} = 1 \) and \( p_{22} = 2 \). Let \( A = \{a,b,c,d\} \) and define \( \varphi : A \to S \) by
\[
a\varphi = (1,0,1), \quad b\varphi = (1,1,2), \quad c\varphi = (2,1,1) \quad \text{and} \quad d\varphi = (2,2,2).
\]
In order to contribute \( 0 \) from the group elements, we must have
\[
\sum_{a \in A} g_{a} (|w|_{a} \bmod n) =
|w|_{b} + |w|_{c} + 2|w|_{d} \equiv 0 \bmod 3.
\]
\end{eg}
\end{comment}

\subsection{Extending to abelian groups}
\label{sec:ReesMatrixComm}

We now extend Theorem~\ref{thm:RMSOverCyclicHeightOne} to Rees zero-matrix semigroups over abelian groups. In order to do this we make use of properties of homomorphisms and projection maps and appeal to the Fundamental Theorem of Finite Abelian Groups.

We begin with some general theory concerning Rees matrix semigroups over direct products of semigroups.
Consider a Rees zero-matrix semigroup $M^0[S\times T; I,\Lambda; R]$, with $R=(r_{\lambda i})$,
where $r_{\lambda i}=(p_{\lambda i},q_{\lambda i})$ lies in $S\times T$ or $r_{\lambda i}=\mathbf{0}_{S\times T}$, the zero element.
Define two further Rees matrix semigroups \( M^{0}[S; I, \Lambda; P] \) and \( M^{0}[T; I, \Lambda; Q] \),
with zeros $\mathbf{0}_S$ and $\mathbf{0}_T$ respectively, and matrices $P$ and $Q$ defined
by $P=(p_{\lambda i})$ and $Q=(q_{\lambda i})$, 
where we take $p_{\lambda i}=\mathbf{0}_S$ and $q_{\lambda i}=\mathbf{0}_T$ whenever $r_{\lambda i}=\mathbf{0}_{S\times T}$.
We then have two natural projections:
\begin{eqnarray*}
&&\pi_S \::\: M^0[S \times T; I,\Lambda; R]\rightarrow M^{0}[S; I, \Lambda; P] \::\: (i,(s,t),\lambda)\mapsto (i,s,\lambda),
\\
&&\pi_T \::\: M^0[S \times T; I,\Lambda; R]\rightarrow M^{0}[T; I, \Lambda; Q] \::\: (i,(s,t),\lambda)\mapsto (i,t,\lambda).
\end{eqnarray*}
Proof that these are epimorphisms is routine and is left as an exercise.

Now suppose that we are given an alphabet $A$ and a map $\varphi\::\: A\rightarrow M^0[S\times T; I,\Lambda; R]$, which extends uniquely to a homomorphism $\bar{\varphi}\::\: A^+\rightarrow M^0[S\times T; I,\Lambda; R]$. Then the compositions $\bar{\varphi}\pi_S$ and $\bar{\varphi}\pi_T$ are homomorphisms from $A^+$ to \( M^{0}[S; I, \Lambda; P] \) and \( M^{0}[T; I, \Lambda; Q] \) respectively. The entire set-up is summarised in the following diagram:

% COMMUTATIVE DIAGRAM
\begin{center}
\includegraphics{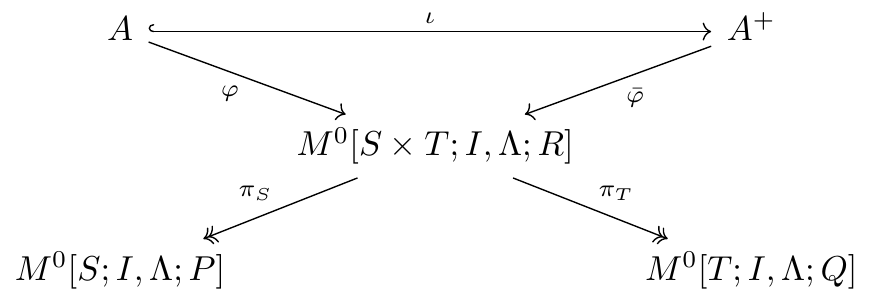}
\end{center}

In the following lemma we relate the preimage of a non-zero element in \( M^{0}[S \times T; I, \Lambda; R] \) to the preimages of non-zero elements in \( M^{0}[S; I, \Lambda; P] \) and \( M^{0}[T; I, \Lambda; Q] \).

\begin{lem}
\label{lem:preimageExtension}
For any \( \left( i, (s, t), \lambda \right) \) in \(  M^{0}[S \times T; I, \Lambda; R] \) we have
\begin{align*}
\left( i, (s, t), \lambda \right) \bar{\varphi}^{-1} &= (i, s, \lambda)(\bar{\varphi}\pi_{S})^{-1} \cap (i, t, \lambda)(\bar{\varphi}\pi_{T})^{-1},
\\
\mathbf{0}_{S\times T} \bar{\varphi}^{-1} &= \mathbf{0}_{S}(\bar{\varphi}\pi_{S})^{-1} \cap \mathbf{0}_{T}(\bar{\varphi}\pi_{T})^{-1}.
\end{align*}
\end{lem}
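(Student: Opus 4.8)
The plan is to prove both set equalities by tracking elements through the two projection maps. Consider an arbitrary word $w \in A^+$. The key observation is that, because $\pi_S$ and $\pi_T$ act as the identity on the $I$- and $\Lambda$-coordinates and merely forget one component of the middle entry, the projections are compatible with $\bar\varphi$ in the sense that $w(\bar\varphi\pi_S) = (w\bar\varphi)\pi_S$ and $w(\bar\varphi\pi_T) = (w\bar\varphi)\pi_T$, which is exactly the statement that $\bar\varphi\pi_S$ and $\bar\varphi\pi_T$ are the composite homomorphisms. Everything then reduces to analysing how $\pi_S$ and $\pi_T$ separate the elements of $M^0[S\times T; I,\Lambda; R]$.

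For the first identity, I would argue as follows. If $w \in (i,(s,t),\lambda)\bar\varphi^{-1}$, then $w\bar\varphi = (i,(s,t),\lambda)$, and applying the projections gives $w(\bar\varphi\pi_S) = (i,(s,t),\lambda)\pi_S = (i,s,\lambda)$ and likewise $w(\bar\varphi\pi_T) = (i,t,\lambda)$, so $w$ lies in the intersection on the right. Conversely, suppose $w$ lies in $(i,s,\lambda)(\bar\varphi\pi_S)^{-1} \cap (i,t,\lambda)(\bar\varphi\pi_T)^{-1}$. Then $(w\bar\varphi)\pi_S = (i,s,\lambda)$ is non-zero, which forces $w\bar\varphi \neq \mathbf{0}_{S\times T}$; writing $w\bar\varphi = (i',(s',t'),\lambda')$, the two projection equalities yield $(i',s',\lambda') = (i,s,\lambda)$ and $(i',t',\lambda') = (i,t,\lambda)$. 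Comparing coordinates gives $i'=i$, $\lambda'=\lambda$, $s'=s$ and $t'=t$, so $w\bar\varphi = (i,(s,t),\lambda)$, as required. The crucial point making the converse work is that a non-zero value under \emph{either} projection already rules out $w\bar\varphi = \mathbf{0}_{S\times T}$, since $\mathbf{0}_{S\times T}\pi_S = \mathbf{0}_S$.

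For the second identity, the zero case, membership in $\mathbf{0}_{S\times T}\bar\varphi^{-1}$ clearly gives $w(\bar\varphi\pi_S) = \mathbf{0}_S$ and $w(\bar\varphi\pi_T) = \mathbf{0}_T$ by the definition of the projections on the zero element, so $w$ lies in the right-hand intersection. For the reverse inclusion I would show the contrapositive: if $w\bar\varphi \neq \mathbf{0}_{S\times T}$, then $w\bar\varphi = (i,(s,t),\lambda)$ for some entries, and $w(\bar\varphi\pi_S) = (i,s,\lambda) \neq \mathbf{0}_S$, so $w$ fails to lie in $\mathbf{0}_S(\bar\varphi\pi_S)^{-1}$ and hence is absent from the intersection.

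The only real subtlety, and the step I would be most careful about, is justifying that $\bar\varphi\pi_S$ and $\bar\varphi\pi_T$ really are homomorphisms whose values on $w$ equal $(w\bar\varphi)\pi_S$ and $(w\bar\varphi)\pi_T$; this rests on $\pi_S$ and $\pi_T$ being epimorphisms, which the excerpt has already noted (``Proof that these are epimorphisms is routine''). Given that, the argument is purely a matter of chasing coordinates through the definitions of the projections, with the one genuinely load-bearing idea being that each projection sends $\mathbf{0}_{S\times T}$ to the respective zero and sends non-zero triples to non-zero triples, so that non-degeneracy is detected coordinatewise. No calculation beyond comparing the three coordinates of a Rees zero-matrix triple is needed.
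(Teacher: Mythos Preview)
Your proposal is correct and follows essentially the same coordinate-chasing argument as the paper: both directions of the first equality are handled exactly as in the paper's proof, and your contrapositive treatment of the zero case is just an explicit version of what the paper dismisses with ``proved in essentially the same way.'' One small remark: the equality $w(\bar\varphi\pi_S) = (w\bar\varphi)\pi_S$ is simply the definition of function composition and requires only that $\pi_S$ be a homomorphism, not an epimorphism, so your stated ``only real subtlety'' is in fact no subtlety at all.
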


\begin{proof}
First, suppose that \( w \in \left( i, (s, t), \lambda \right) \bar{\varphi}^{-1} \); that is, \( w\bar{\varphi} = \left( i, (s, t), \lambda \right) \). Then
\[
w(\bar{\varphi}\pi_{S}) = \left( i, (s, t), \lambda \right) \pi_{S} = (i, s, \lambda)
\]
Hence \( w \in (i, s, \lambda)(\bar{\varphi}\pi_{S})^{-1} \), and, analogously, \( w \in (i, t, \lambda)(\bar{\varphi}\pi_{T})^{-1} \). 

Conversely, suppose that \( w \in (i, s, \lambda)(\bar{\varphi}\pi_{S})^{-1} \cap (i, t, \lambda)(\bar{\varphi}\pi_{T})^{-1} \),
so that  \( w(\bar{\varphi}\pi_{S}) = (i, s, \lambda) \) and \( w(\bar{\varphi}\pi_{T}) =  (i, t, \lambda) \). 
Note that $w\bar{\varphi}\neq \mathbf{0}_{S\times T}$, so we must have that
\( w\bar{\varphi} = \left(i_{w}, (s_{w}, t_{w}), \lambda_{w} \right) \) for some \( i_{w} \in I \), \( (s_{w}, t_{w}) \in S \times T \) and \( \lambda_{w} \in \Lambda \). Now,
\[
(i, s, \lambda) = (w\bar{\varphi})\pi_{S} = \left(i_{w}, (s_{w}, t_{w}), \lambda_{w} \right)\pi_{S} = (i_{w}, s_{w}, \lambda_{w} )
\]
and, similarly, \( (i, t, \lambda) = (i_{w}, t_{w}, \lambda_{w}) \). Hence, \( i_{w} = i \), \( s_{w} = s \), \( t_{w} = t \) and \( \lambda_{w} = \lambda \). Therefore, \( w\bar{\varphi} = \left(i_{w}, (s_{w}, t_{w}), \lambda_{w} \right) = \left(i, (s, t), \lambda \right) \) and \( w \in \left(i, (s, t), \lambda \right)\bar{\varphi}^{-1} \), as required.

The second equality is proved in essentially the same way.
\end{proof}

We can now prove the following:

\begin{thm}
\label{cor:directProdKeepsHeight}
Let \( S \) and \( T \) be finite semigroups. If languages recognised by finite Rees zero-matrix semigroups over $S$ or $T$ all have star-height $\leq h$, then all the languages recognised by  the finite Rees zero-matrix semigroups over the direct product $S\times T$ also have star-height $\leq h$.
\end{thm}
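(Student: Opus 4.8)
The plan is to reduce the bound for $S\times T$ to the bounds for the factors $S$ and $T$, exploiting the fact that both finite union and intersection preserve the property of having star-height $\leq h$, so that the whole argument becomes an application of Lemma~\ref{lem:preimageExtension}.

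First I would set up the finiteness bookkeeping. Since $S$, $T$ and the whole semigroup $M^0[S\times T; I,\Lambda; R]$ are finite, the index sets $I$ and $\Lambda$ are finite, and hence the projected semigroups $M^{0}[S; I, \Lambda; P]$ and $M^{0}[T; I, \Lambda; Q]$ are themselves finite Rees zero-matrix semigroups over $S$ and over $T$ respectively. Given any language $L$ recognised by $M^0[S\times T; I,\Lambda; R]$, there is a morphism $\bar{\varphi}\::\: A^+\rightarrow M^0[S\times T; I,\Lambda; R]$ and a (necessarily finite) accepting set $X$ with $L = X\bar{\varphi}^{-1} = \bigcup_{x\in X} x\bar{\varphi}^{-1}$, so it suffices to bound the star-height of each single preimage $x\bar{\varphi}^{-1}$.

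Next I would apply Lemma~\ref{lem:preimageExtension} to such a preimage. For a non-zero $x=(i,(s,t),\lambda)$ this gives
\[
x\bar{\varphi}^{-1} = (i, s, \lambda)(\bar{\varphi}\pi_{S})^{-1} \cap (i, t, \lambda)(\bar{\varphi}\pi_{T})^{-1},
\]
and for $x=\mathbf{0}_{S\times T}$ it gives
\[
x\bar{\varphi}^{-1} = \mathbf{0}_{S}(\bar{\varphi}\pi_{S})^{-1} \cap \mathbf{0}_{T}(\bar{\varphi}\pi_{T})^{-1}.
\]
The key observation, and the one point I would take care to spell out, is that each factor on the right-hand side is the preimage of a \emph{single} element under one of the morphisms $\bar{\varphi}\pi_{S}\::\: A^+\rightarrow M^{0}[S; I, \Lambda; P]$ or $\bar{\varphi}\pi_{T}\::\: A^+\rightarrow M^{0}[T; I, \Lambda; Q]$, and is therefore a language recognised by a finite Rees zero-matrix semigroup over $S$ or over $T$. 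By hypothesis every such language has star-height $\leq h$, so using $h(E \cap F) = \mathrm{max}\{h(E), h(F)\}$ we obtain $h(x\bar{\varphi}^{-1}) \leq h$.

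Finally I would assemble the pieces: $L$ is the finite union of the languages $x\bar{\varphi}^{-1}$ over $x\in X$, and since taking finite unions does not increase star-height we conclude $h(L)\leq h$. The proof is really a matter of routine bookkeeping once Lemma~\ref{lem:preimageExtension} and the projection set-up are in hand; the only genuinely substantive step is recognising that the intersection factors qualify as languages recognised by Rees zero-matrix semigroups over a single factor, which is exactly what the construction of $\pi_S$ and $\pi_T$ preceding the lemma was designed to provide.
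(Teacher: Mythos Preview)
Your proposal is correct and follows essentially the same approach as the paper: apply Lemma~\ref{lem:preimageExtension} to write each element preimage as an intersection of preimages in the factor Rees zero-matrix semigroups, invoke the hypothesis on those, and then take the finite union over the accepting set. You have simply spelled out in more detail the finiteness bookkeeping and the reason the intersection factors are genuinely recognised by Rees zero-matrix semigroups over $S$ and $T$, points the paper leaves implicit in its two-sentence proof.
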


\begin{proof}
Lemma \ref{lem:preimageExtension} allows us to express the preimage of an element in the Rees zero-matrix semigroup over the direct product as the intersection of two preimages of elements in Rees zero-matrix semigroups over the factors.
Since the preimage of any subset is a finite union of preimages of elements, the result follows.
\end{proof}

By combining the above results we can now extend Theorem~\ref{thm:RMSOverCyclicHeightOne} to Rees zero-matrix semigroups over abelian groups.

\begin{thm}
\label{thm:RZMSAbelianHeightOne}
A regular language recognised by a Rees zero-matrix semigroup over an abelian group is of star-height at most one.
\end{thm}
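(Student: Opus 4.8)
The plan is to reduce the abelian case to the cyclic case already settled in Theorem~\ref{thm:RMSOverCyclicHeightOne}, using the direct-product reduction of Theorem~\ref{cor:directProdKeepsHeight} together with the Fundamental Theorem of Finite Abelian Groups. The combinatorial and structural work has all been done in the preceding sections; what remains is to assemble it.

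First I would invoke the Fundamental Theorem of Finite Abelian Groups to write the underlying abelian group \( G \) as a direct product of cyclic groups, \( G \cong \mathbb{Z}_{n_{1}} \times \mathbb{Z}_{n_{2}} \times \dots \times \mathbb{Z}_{n_{m}} \). The strategy is then to induct on the number \( m \) of cyclic factors, at each stage peeling off a single factor and invoking Theorem~\ref{cor:directProdKeepsHeight} with \( h = 1 \).

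For the base case \( m = 1 \), the group \( G \) is cyclic, so Theorem~\ref{thm:RMSOverCyclicHeightOne} gives directly that every language recognised by a Rees zero-matrix semigroup over \( G \) is of star-height at most one. For the inductive step I would set \( S = \mathbb{Z}_{n_{1}} \times \dots \times \mathbb{Z}_{n_{m-1}} \) and \( T = \mathbb{Z}_{n_{m}} \), so that \( G \cong S \times T \). The inductive hypothesis, applied to the abelian group \( S \) (which has \( m-1 \) cyclic factors), asserts that every language recognised by a Rees zero-matrix semigroup over \( S \) is of star-height at most one, while Theorem~\ref{thm:RMSOverCyclicHeightOne} yields the same conclusion for \( T \). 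Theorem~\ref{cor:directProdKeepsHeight} then transfers this bound to languages recognised by Rees zero-matrix semigroups over \( S \times T \cong G \), completing the induction.

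I do not expect a serious obstacle here, since the genuine difficulties lie earlier: the bounds on the counting languages in Section~\ref{sec:countingSubwords}, and the preimage decomposition of Lemma~\ref{lem:preimageExtension} underpinning Theorem~\ref{cor:directProdKeepsHeight}. The only point requiring care is to observe that the intermediate factor \( S = \mathbb{Z}_{n_{1}} \times \dots \times \mathbb{Z}_{n_{m-1}} \) is itself a finite abelian group with strictly fewer cyclic factors, so that the inductive hypothesis genuinely applies to it rather than merely to a single cyclic group; this is precisely what allows the induction on \( m \) to close.
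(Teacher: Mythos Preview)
Your proposal is correct and follows essentially the same route as the paper: invoke the Fundamental Theorem of Finite Abelian Groups and then apply Theorem~\ref{cor:directProdKeepsHeight} repeatedly (by induction on the number of cyclic factors) to reduce to the cyclic case handled in Theorem~\ref{thm:RMSOverCyclicHeightOne}. The paper states this in a single sentence, whereas you have spelled out the induction explicitly, but the argument is the same.
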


\begin{proof}
Invoking the Fundamental Theorem of Finite Abelian Groups and applying Corollary~\ref{cor:directProdKeepsHeight} a finite number of times to Rees zero-matrix semigroups over cyclic groups yields the result.
\end{proof}

Theorem~\ref{thm:RZMSAbelianHeightOne} can also be deduced from existing theoretical results when attention is restricted to the basic Rees matrix construction (without zero). Indeed, let \( S \) be a Rees matrix semigroup over an abelian group \( G \). By  \cite[Proposition XI.3.1]{eilenberg76}, \( S \) divides a wreath product of \( G \) by an aperiodic monoid. However, by \cite[Theorem 7.8]{pin92}, every language recognised by \( \mathbf{Gcom \ast A} \) is of star-height at most one, where \( \mathbf{Gcom \ast A}\) is the pseudovariety generated by wreath products of commutative groups by aperiodic monoids. Hence, \( S \) belongs to the pseudovariety \( \mathbf{Gcom \ast A} \) and every language recognised by \( S \) is of star-height at most one.
\medskip

\noindent
\textbf{Acknowledgement.}
The authors would like to thank the anonymous referee for their suggestions concerning alternative proof strategies for some of the results.

\end{document}